\long\def\commentbegin #1\commentend{}
\newtheorem{theorem}{Theorem}[section]
 \newtheorem{lemma}[theorem]{Lemma}
 \newtheorem{corollary}[theorem]{Corollary}
\renewcommand{\epsilon}{\varepsilon}
\renewcommand{\geq}{\geqslant}
\renewcommand{\leq}{\leqslant}
\newcommand{\Prob}[1]{\hbox{\rm I\kern-2pt P}\left[#1\right]}
\def\polylog{\operatorname{polylog}}
\newcommand{\IncreaseExpansion}{\textsc{IncreaseExpansion}}
\newcommand{\CreateExpander}{\textsc{CreateExpander}}
\newcommand{\ExpanderDegreeReduction}{\textsc{ExpanderDegreeReduction}}
\newcommand{\incVect}{\textbf{i}_G}
\newcommand{\sketchVect}{\textbf{s}_G}
\newcommand{\PushSum}{\textsc{Push-Sum}}
\newcommand{\AggregateSketchVector}{\textsc{Aggregate-Sketch-Vector}}
\newcommand{\E}{\mathbbm{E}}
\newcommand{\Ind}{\mathbbm{1}}
\newcommand{\V}{\mathcal{V}}
\def\st{k} 
\def\ShowComment{True}
\def\billy#1{{\color{green}\underline{\textsf{Billy:}}} {\color{blue} \emph{#1}}}
\def\gopal#1{{\color{red}\underline{\textsf{Gopal:}}} {\color{blue} \emph{#1}}}
\def\fabien#1{{\color{orange}\underline{\textsf{Fabien:}}} {\color{blue} \emph{#1}}}
\def\billy#1{}
\def\gopal#1{}
\def\fabien#1{}
\DeclareMathAlphabet{\mathsc}{OT1}{cmr}{m}{sc}
\title{Time- and Communication-Efficient  Overlay Network Construction via Gossip}
\date{}
\newcommand*\email[1]{\href{mailto:#1}{\url{#1}}}
\author{
Fabien Dufoulon\thanks{Part of the work was done while Fabien Dufoulon was a postdoctoral fellow at the University of Houston in Houston, USA. During that time, F. Dufoulon was supported in part by National Science Foundation (NSF) grants CCF-1540512, IIS-1633720, and CCF-1717075 and U.S.-Israel Binational Science Foundation (BSF) grant 2016419.} \\ Lancaster University \\ 
\email{f.dufoulon@lancaster.ac.uk}
\and%
Michael Moorman\thanks{M. Moorman was supported in part by  NSF IIS-1633720 REU Supplement grant.} \\ University of Houston \\ \email{mmoorman@cougarnet.uh.edu}%
\and%
William K. {Moses Jr.}\thanks{Part of the work was done while William K. Moses Jr. was a postdoctoral fellow at the University of Houston in Houston, USA. W. K. Moses Jr. was supported in part by NSF grants CCF-1540512, IIS-1633720, and CCF-1717075 and BSF grant 2016419.} \\ Durham University \\ \email{wkmjr3@gmail.com}%
\and%
Gopal Pandurangan\thanks{G. Pandurangan was supported in part by NSF grants CCF-1540512, IIS-1633720, and CCF-1717075 and BSF grant 2016419.} \\ University of Houston \\ \email{gopal@cs.uh.edu}%
}
\begin{document}
\maketitle

\begin{abstract}

We focus on the well-studied problem of distributed overlay network construction.
We consider a synchronous {\em gossip-based} communication model where in each round a node can send a message of small size  to another node
whose identifier it knows. 
The network is assumed to be {\em reconfigurable}, i.e., a node can add new connections (edges) to other nodes
whose identifier it knows or drop existing connections.
 Each node
initially has only knowledge of its own identifier  and the identifiers of its neighbors.
The overlay construction problem is, given an arbitrary (connected) graph, to reconfigure it to obtain a bounded-degree {\em expander} graph as efficiently as possible. The overlay construction problem is relevant to building real-world peer-to-peer 
network topologies that have desirable properties such as low diameter, high conductance, robustness to adversarial deletions, etc.

Our  main result is that we show that starting from any {\em arbitrary} (connected) graph $G$ on $n$ nodes and $m$ edges, we can construct
an overlay network that is a constant-degree {\em expander} in $\polylog{n}$ rounds  using only $\tilde{O}(n)$ messages.\footnote{The notation $\tilde{O}$ hides a $\polylog{n}$ multiplicative factor.}
Our time and message bounds are {\em both} essentially optimal (up to polylogarithmic factors). Our distributed overlay construction protocol
is very lightweight as it uses gossip (each node communicates with only one  neighbor in each round) 
and also scalable as it uses only $\tilde{O}(n)$ messages, which is
{\em sublinear} in  $m$ (even when $m$ is moderately dense). To the best of our knowledge, this is the first
result that achieves overlay network construction in $\polylog{n}$ rounds and $o(m)$ messages. Our protocol
uses graph sketches in a novel way to construct an expander overlay that is both time and communication efficient.

A consequence of  our  overlay construction protocol is that  distributed computation can be performed very efficiently
in this model. In particular,  a wide range of fundamental tasks such as broadcast, leader election,  and minimum spanning tree (MST) construction  can be accomplished in $\polylog{n}$ rounds and $\tilde{O}(n)$ message complexity in any graph.
    
\end{abstract}

\section{Introduction}\label{sec:introduction}
\subsection{Background and Prior Work}
Many of today's large-scale distributed systems in the Internet are peer-to-peer (P2P) or overlay networks.
In such networks, the (direct) connections between nodes  can
be considered as {\em virtual (or logical)} connections as they make use of the physical connections of the underlying Internet.  Furthermore, in these networks,  a node can communicate with another node if it knows the IP address of the other node, and 
can also (potentially) establish a connection (link) to it. 
Thus, the network can {\em reconfigure} itself, choosing which connections to add and which to drop.

In this paper, we consider  the well-studied problem of constructing an efficient
overlay topology in a distributed fashion in a reconfigurable network.
Overlay construction is particularly important in modern P2P networks,
which depend significantly on topological properties to ensure
efficient performance. In fact, over the last two decades, several theoretical works~\cite{PRU01,LS03,mihail-p2p,CDG07-soda,CDG07-soda,JP12,focs2015} have focused on building  P2P networks
with various desirable properties such as high conductance,
low diameter, and robustness to a large number of adversarial deletions.
The high-level idea  in all these works is  to distributively build
a (bounded-degree) \emph{random graph}   topology which guarantees the above  properties. This idea exploits the fact
that a random graph is an {\em expander} with high probability and hence has all the above desirable properties~\cite{Hoory,MUbook}.\footnote{Throughout, ``with high probability (w.h.p.)'' means
with probability at least $1-1/n^c$ for some constant $c \geq 1$; $n$ is the network size.} Indeed, random  graphs have  been used extensively to model P2P networks (see e.g.,~\cite{LS03, PRU01, mihail-p2p, CDG07-soda, Mahlmann_2006, Augustine_2012, Augustine_2013_PODC, byz-leader, Augustine_2013_SPAA}).
It should also be noted that the random connectivity topology is  widely
deployed in many P2P  systems today, including those that implement blockchains and cryptocurrencies, e.g., Bitcoin~\cite{perigee}.

Several prior works \cite{AACWY05,GPRT20,GHSW21} have addressed the problem of constructing an expander topology starting
from an {\em arbitrary} topology network. One of the earliest works
is that of Angluin et al.~\cite{AACWY05} who showed how one can transform an arbitrary connected graph $G$ on $n$ nodes
and $m$ edges
into a binary search tree of depth $O(\log n)$ in  $O(d+\log n)$ rounds and $O(n(d + \log n))$ messages, where $d$ is the maximum degree.
(Their model is similar to ours, where a node can only send a message to a single neighbor
per round.) It can be shown that an $O(\log n)$-depth binary tree can be transformed into many other desirable topologies (such
as an expander, butterfly, or hypercube). 
The work of Gilbert et al.~\cite{GPRT20} presented a distributed protocol  that when given any (connected) network topology having $n$
nodes and $m$ edges will transform it in to a given (desired) target topology such as an expander, hypercube,  or Chord, with high probability. This protocol  incurred    $O(\polylog{n})$ rounds and exchanged messages
of only  small size ($O(\log n)$ bits) per communication link per round, and had a total message complexity
of $\tilde{\Theta}(m)$.\footnote{The notation $\tilde{O}$ hides a $\polylog{n}$ multiplicative factor.}
Note that while the protocol of Gilbert et al. has a better time complexity compared to Angluin et al., when $d$ is large, the protocol of Gilbert et al. is not gossip-based  (i.e., a node can send messages
to all its neighbors in one round, even if its degree is large), unlike that of Angluin et al.

The recent work of G{\"o}tte et al. \cite{GHSW21} presented an overlay construction algorithm that when given
an arbitrary (connected) graph, transforms the graph into a {\em well-formed tree}, i.e., a rooted tree of constant 
degree and $O(\log n)$ diameter. In particular, their protocol first constructs an $O(\log n)$-degree  expander (a well-formed tree can be obtained from the expander by known techniques).   The protocol  takes $O(\log n)$ rounds
which is asymptotically time-optimal, since $\Omega(\log n)$ rounds is a lower bound for constructing a well-formed tree
or a constant-degree expander from an arbitrary graph~\cite{GHSW21}.
However, their protocol takes $\tilde{\Theta}(m)$ messages ($m$ is the number of edges in the starting graph) as each node needs to send $d\log n$ messages in a round
where $d$ is the initial (maximum) degree. The novelty of their protocol is the repeated use of short (constant length) random walks to increase the graph conductance. 

We note that all the above overlay construction protocols, while being fast, i.e., taking only $O(\polylog{n})$ rounds, use $\tilde{\Theta}(m)$
messages, i.e., linear in the number of edges of the initial graph. An important question is whether one can design
overlay construction protocols that are significantly communication-efficient, i.e., taking $o(m)$ messages
or even $\tilde{O}(n)$ messages. In fact, the  work of G{\"o}tte et al.~\cite{GHSW21} raised the question
of whether it is possible to obtain a fast overlay construction protocol that is also communication-efficient.

In this paper, we answer the above question and present the first  overlay construction protocol that
is both time- and communication-efficient: given 
an arbitrary connected graph on $n$ nodes and $m$ edges, the protocol constructs a constant-degree expander in $\polylog{n}$
rounds using only $\tilde{O}(n)$ messages (regardless of the value of $m$). We note that our protocol uses only messages of small size ($O(\polylog{n})$ bits). Furthermore, it uses {\em gossip-based communication} which is fully-decentralized and lightweight. Hence, it inherits the usual advantages  of gossip-based protocols (e.g., see \cite{Kempe02,demers,karp-gossip}) such as robustness,  no single point of action, etc.\footnote{
For this reason, we avoid more centralized methods such as building and using a BFS tree for aggregation, etc., in favor of fully-decentralized gossip-based aggregation protocols~\cite{Kempe02,demers}.}

\subsection{Model}
\label{subsec:model}

Before we formally state our main result, we discuss our model which is similar to that used in previous
work on overlay network construction (see e.g.,~\cite{AACWY05}).

We assume that we are given a connected {\em arbitrary} graph $G = (V,E)$ as input. Let $|V| = n$ and $|E| = m$. Each node has a unique ID (identifier) taken from the range $[1,N]$, where $N = n^c$ for some positive constant $c \geq 1$. Thus, each node
ID can be represented using $O(\log n)$ bits. 

We assume that the communication links are \emph{reconfigurable}: if a node $u$ knows about the ID of some node $v$, then $u$ can establish or drop a link to $v$.\footnote{Strictly speaking, it takes a successful handshake between $u$ and $v$ to establish or drop a bidirectional link.
For simplicity, and since it does not change the asymptotic bounds of our results, we assume that these connections happen instantaneously.} Also, as is standard in overlay (and P2P) networks, a node can  communicate with another node if it knows the identity  of the other node.

The computation proceeds in {\em synchronous} rounds and  the communication topology produced by the execution evolves as a sequence of graphs $G=G_1,G_2,\dots$, where $G_r=(V,E_r)$ corresponds to the network at the beginning of round $r$.  The graph $G_1$ is the initial configuration, and determines the initial knowledge of nodes which is restricted to only knowledge of their own ID and the IDs of their (respective) neighbors. This is a standard model in distributed computing, called  the {\em Knowledge-Till-Radius 1 $(KT1)$} model. Note that in overlay and P2P networks, this model is the natural model, as every node knows the identity (IP address) of itself and 
of its neighbors.  Nodes initially have no knowledge of any other nodes (other than their respective neighbors) or any global knowledge including the initial
topology.  However, we assume that nodes know an upper bound on $n$ (in fact, just a constant factor upper bound of
$\log n$ is sufficient).

We assume a {\em gossip-based} communication model which is very lightweight, where in a round,  a node can send a small-sized message (of size $O(\polylog{n})$ bits) to only
one of its neighbors.  Thus,  a round $r$ consists of the following three steps: (i) each node contacts one neighbor, (ii) each node sends a $O(\polylog{n})$ bit message to the contacted neighbor and receives a $O(\polylog{n})$ bit reply, and (iii)
  after all messages in transit have been received, $u$ performs some local computation possibly including changes to its communication links, resulting in $G_{r+1}$. 
We call this model the \textsf{P2P-GOSSIP} model. Note that although the gossip model allows each node to send a message to only one neighbor per round, one can easily simulate sending messages to $d$ neighbors in a round, by performing gossip for $d$ rounds,
thus blowing up the number of rounds by a factor of $d$. On the other hand, in the standard CONGEST model, 
a node can send a message (of small size, say $O(\log n)$ bits) to all its neighbors in a round. In addition, in the LOCAL model, the message
size is unbounded.

\subsection{Our Contributions}
\label{subsec:contribution}

\noindent {\bf Main Result.} Our main contribution is a distributed protocol for overlay network construction that given an {\em arbitrary} connected  graph, constructs an overlay graph whose topology is a constant-degree {\em expander}. Informally,  an expander  is a graph that has {\em constant} conductance (see definition in Section \ref{subsec:definitions}). 
As mentioned earlier, an expander graph has very desirable properties: low diameter ($O(\log n)$), high conductance, fast mixing of random walks   (i.e., a random walk reaches stationary
distribution  in $O(\log n)$ rounds, which is useful for fast random sampling), and robustness to large adversarial deletions (deleting
even a constant fraction of nodes leaves a giant component of size $\Theta(n)$ that is also an expander)~\cite{Hoory,bagchi,sigact,byz-leader}.

The protocol takes $\polylog{n}$ rounds and uses only $\tilde{O}(n)$ messages.\footnote{We have not chosen to optimize the log factors
in our protocol, as this was not the primary focus. As it is, our protocol takes $O(\log^5n)$ rounds and this can be improved.}
These time and message bounds are {\em both} essentially optimal (up to polylogarithmic factors), since it is easy 
to show that 
$\Omega(\log n)$ is a lower bound on time \cite{GHSW21} and $\Omega(n)$ is a lower bound
on the number of connections that need to be added/deleted (and hence the number of messages).\footnote{
Consider a dumbbell graph consisting of two cliques joined by a single edge as the starting graph. To convert
this graph to a constant degree expander,  at least $\Theta(n)$ edges have to be added between the cliques, and the cliques
themselves have to be sparsified by dropping all but a constant number of random edges.}

Our distributed overlay construction protocol (Section \ref{sec:overlayConstruction})
is very lightweight and scalable, as it uses gossip-based communication (each node communicates with only one
neighbor per round). 
 To the best of our knowledge, this is the first
result that achieves overlay network construction in $O(\polylog{n})$ rounds and $o(m)$ messages, i.e., sublinear
in $m$, the number of edges of the initial graph. All prior protocols took at least $\tilde{O}(m)$ messages in general while taking $\polylog{n}$ rounds. We note that once an expander topology is constructed, several other well-known  topologies such as hypercube,
butterfly, binary tree,  etc.\ can be constructed~\cite{GHSW21,AACWY05}.

We simulated our protocol to study its performance  in several types of graphs (Section \ref{sec:experiment}). The results validate
the theory and shows that in all the different types of graphs, the conductance increases significantly
to essentially the best possible. The algorithm also finishes fast, i.e., in a few phases.

\medskip

\noindent {\bf Implications.} A  consequence of  our  overlay construction protocol, is that  distributed computation can be performed very efficiently in the P2P-GOSSIP model. In particular,   a wide range of fundamental tasks such as broadcast, leader election,   and  spanning tree (ST) construction can be accomplished in $\polylog{n}$ round complexity and $O(n\polylog{n})$ message complexity in any graph. This follows  because one can first construct a constant-degree expander using the overlay construction protocol and then do the above tasks on the expander graph (which has $O(\log n)$ diameter and $O(n)$ edges) 
in $O(\log n)$ rounds and $\tilde{O}(n)$ messages by just simulating
standard CONGEST model algorithms in gossip~\cite{Peleg_2000_Book,Kutten_2015_JACM}.\footnote{Since the graph is of constant degree, one can easily simulate a round of the CONGEST model,  where a node sends a message to all its neighbors, by performing gossip for a constant number of rounds.}

Furthermore, one can also show that the   minimum spanning tree (MST) problem can be solved
very efficiently. (Note that in the MST problem, we are
 given an arbitrary (connected) undirected graph $G$ with edge weights,  and the goal is to find the MST of $G$.)
 We will outline how this can be accomplished in $\polylog{n}$ rounds and $O(n\polylog{n})$ messages which is
 a consequence of this work and prior works.
 First, using our expander overlay protocol, we add 
a (constant-degree) expander overlay  on  $G$ (i.e., the expander edges are added in addition to the edges of $G$).
The expander edges will be used for efficient communication in $G$. For this, we 
convert the expander (that is not addressable) into an   hypercube (that is addressable) which allows
efficient routing between any two nodes in $O(\log n)$ rounds and $O(n\log{n})$ messages.
This conversion  can be accomplished using the techniques of \cite{AACWY05,gmyr-hybrid,aspneswu} (see also \cite{GHSW21}) or the protocol of \cite{spaa22}.\footnote{
The techniques of \cite{AACWY05,aspneswu,GHSW21}  construct a  well-formed tree which can then be transformed into
an hypercube. The well-formed tree can be easily constructed from an expander \cite{GHSW21}. Alternatively,
one can use the protocol of \cite{spaa22} which  gives a fully-decentralized and robust protocol  using random walks (which can be simulated using gossip) to construct an addressable  hypercube from a constant-degree expander.} All these protocols  take $\polylog{n}$ rounds
and $\tilde{O}(n)$ messages to convert a constant-degree expander into an hypercube. 
Using the addressable hypercube overlay on top of $G$, we can 
 efficiently implement the Gallagher-Humblet-Spira (GHS) algorithm \cite{GHS83}  as shown in Chatterjee et al.~\cite{CPP20} to compute the MST of $G$ in $\polylog{n}$ rounds and $\tilde{O}(n)$ message complexity.\footnote{We note that Chatterjee et al. uses the permutation routing algorithm of Ghaffari and Li~\cite{jasonli} (also see \cite{GhaffariKS17}) which takes
 $2^{O(\sqrt{\log n})}$ rounds on an expander (which is not addressable). In contrast, exploiting
 the reconfigurability of the  P2P model, we can {\em reconfigure} the expander into an hypercube which allows
 permutation routing to be accomplished in $O(\log n)$ rounds  using the standard Valiant routing~\cite{valiant}.} 

We note that the above results are a significant improvement in the round complexity of solving the above
fundamental problems in the standard $KT1$ model. King, Kutten, and Thorup  \cite{KKT15} showed
that all these problems can be solved using $\tilde{O}(n)$ messages (regardless of the value of $m$, the number
of edges of the graph), but this takes $\tilde{O}(n)$ rounds in the standard CONGEST model.
Obtaining  sublinear, i.e., $\tilde{O}(n)$  messages, in time that is significantly faster than $O(n)$ is an open problem in the CONGEST model (see also \cite{MashreghiK17,gmyr,KKT15}). In contrast, we show that in the P2P-GOSSIP model, one can solve these problems
in sublinear ($\tilde{O}(n)$) messages and $\polylog{n}$ time.

Another implication is to the complexity of the  GOSSIP model, see e.g., \cite{CS12, CHKM12} and the references therein. The works of Censor-Hillel et al.~\cite{CHKM12} studied the complexity of
distributed computation in the GOSSIP model --- where a node may only initiate contact with a single neighbor in each round, but  {\em unbounded} messages sizes are allowed (unlike in our P2P-GOSSIP model which uses small message sizes)
--- in comparison to the more standard model of distributed computation, namely, the much less restrictive LOCAL  model, where a node may simultaneously communicate with all of its neighbors in a single round (also message sizes can be unbounded). This work studied the complexity
of the {\em information dissemination} problem in which each node has some (possibly unique) data item and each node is required to collect all the data items from all nodes.  They gave an algorithm that solves the information dissemination problem in at most $O(D+\polylog {n})$ rounds in a network of size $n$ and diameter $D$. This is at most an additive polylogarithmic factor from the trivial lower bound of $D$, which applies even in the LOCAL model. In fact, they prove that any algorithm that requires $T$ rounds in the LOCAL model can be simulated in $O(T+\polylog{n})$ rounds in the GOSSIP model, showing that GOSSIP and LOCAL models are essentially equivalent (up to polylogarithmic factors).
Our work shows that in the P2P-GOSSIP model, if we allow unbounded message sizes, one can solve information dissemination
in $O(\polylog{n})$ rounds in a straightforward way by first constructing a constant-degree expander (that has diameter
$O(\log n)$) and then doing gossip on the expander~\cite{CS12}.

\medskip

\noindent {\bf High-level Overview and Technical Contributions.}
Our overlay construction protocol (Section \ref{sec:overlayConstruction}) uses a combination of several techniques in a non-trivial way
to  construct an expander overlay in $\tilde{O}(n)$ messages and $\polylog{n}$ time. 

Our expander overlay construction protocol is conceptually simple and is similar  to the
classic GHS algorithm~\cite{GHS83} and consists of several phases. In the first phase, we start with  $n$ clusters, each corresponding to a node.  In a phase, adjacent clusters
are merged and the protocol maintains the invariant
that each cluster is a {\em constant-degree expander} at the end of each phase.  

We now describe a phase of the protocol which consists of three major steps. By the maintained invariant, we can assume
that all the clusters are constant-degree expanders (this is trivially true in the first phase, since each cluster is
a singleton node). As in GHS, one has to quickly find outgoing edges from each cluster. Except for the first phase, it is non-trivial to accomplish this using gossip in $\polylog{n}$ rounds and using
$O(n\polylog{n})$ messages per phase.
The first major step  in a phase is to efficiently aggregate sketches from each cluster. 
Informally, a sketch of a node is a short representation of its adjacency list, i.e., using $O(\polylog{n})$ bits (see Section \ref{subsec:graph-sketch}). An important property of these graph sketches is that the sketches of all nodes in a cluster can be aggregated in $O(\polylog{n})$ bits and an outgoing edge of the cluster can be found with high probability from
the aggregated sketch.
A main technical contribution, that can be of independent interest, is showing how one can use  graph sketches \cite{AhnGM12a,AhnGM12b}
 to efficiently sample an outgoing edge using gossip in an expander. We show that we can adapt the PUSH-SUM  gossip protocol
 of \cite{Kempe02} to efficiently aggregate sketches in an accurate manner in all nodes of a cluster (Section \ref{subsec:push-sum}). 
 
 In the second step, each node uses the aggregated sketch to sample an outgoing edge.
 The clusters along with their respective outgoing edges induce a disjoint set of connected components.
 Each such component has to be converted into a constant degree expander.
To accomplish this, we use the protocol of G{\"o}tte et al. \cite{GHSW21} that takes an arbitrary  {\em constant-degree} graph
and converts it into an expander that has $O(\log n)$ degree.\footnote{Note that the protocol of \cite{GHSW21} works on graphs of somewhat higher degree, say $O(\log n)$,
but the maximum degree needs to be small to get the performance guarantees as claimed.} One cannot directly invoke the protocol of \cite{GHSW21}
on each connected component, since the degree of a node may not be constant (multiple outgoing edges can go to a single node).   In the second step, to satisfy the degree bound needed for the protocol of \cite{GHSW21} (invoked in step 3) we do a {\em degree reduction} to keep the degrees of all nodes constant.

In the third step, we run the protocol of \cite{GHSW21} which uses random walks (and can be simulated in the gossip model
with an $O(\log n)$ factor slow down in the number of rounds) to convert each connected component into
an $O(\log n)$-degree expander in $\polylog{n}$ rounds of gossip. We then run an efficient distributed protocol for  reducing
the degree of the expander to $O(1)$, which may also be of independent interest (Section \ref{subsec:degreeReductionExpander}).
This is essential in maintaining the invariant of the next phase (without this degree reduction, one can show
that nodes' degrees can grow large from phase to phase).

We show that each phase  reduces the number of clusters by a constant factor and hence the total number of phases is $O(\log n)$. At the end, the whole graph becomes one cluster, which will be a constant degree expander.

\subsection{Additional Related Work and Comparison}
\label{subsec:related}
There have been several prior works on P2P and reconfigurable networks (e.g., \cite{Pandurangan_2014,dinitz,LS03,focs2015}) that assume 
  an expander graph to start with, and then faults occur (insertions or deletions of nodes due to churn or other dynamic changes), even repeatedly, and the goal is to maintain the expander.  
  One can view the current paper (as well as prior works discussed earlier \cite{GHSW21,AACWY05,GPRT20})  as a preprocessing step that starts with an arbitrary graph and converts it to an expander.  Subsequently, one may use the protocols from these works to maintain the expander. For future work, it will be useful to  extend our protocol to a dynamic setting where churn is present. In this context, one  can see the current work as a first step towards obtaining a protocol that builds and maintains an expander overlay in {\em dynamic} P2P networks that suffer from churn and where the churn can result in  (intermediate) arbitrary topologies that are far from expanders (but still connected, say). In that case, a protocol like ours will be useful to reconstruct an expander.

 Finally, we point out that there has been work on other models that are similar, yet different, compared to the
 P2P-GOSSIP model. Two notable examples include the node-congested clique model \cite{ncc} and the hybrid model \cite{gmyr-hybrid}.
 In the node-congested clique model, each node is constrained to send only a small amount of messages of small size per round, say
 $O(\log n)$ messages each of size $O(\log n)$ bits. In the hybrid model, two modes of communication are assumed:
 the  LOCAL (or CONGEST) model in the input graph $G$, where a node
 can send unlimited (or small-sized) messages on the (local) edges of $G$ and a node-congested clique model where a node
 can send $O(\log n)$-size messages to $O(\log n)$ other nodes. Note that instead of a clique model (where a node
 can potentially communicate with any other node), one can also assume an $O(\log n)$-degree expander (or another sparse degree structure such as a tree) built on top of $G$ and nodes can communicate on these using small-sized messages.
 This variant of the hybrid model is assumed in \cite{GHSW21}. One main difference between the P2P-GOSSIP model
 and the node-congested clique model is that in the former, while a node sends a messages to 
 only one neighbor, a node can receive as many messages as its degree size (e.g., the center node in a star).
 Furthermore, in the node-congested clique model, it is assumed that each nodes knows the IDs of all other nodes which makes
 routing trivial. Whereas in the P2P-GOSSIP model, a node knows only the IDs of its  neighbors.
 Efficient algorithms for MST (and other problems) that take polylogarithmic rounds are known in the node-congested clique
 model. In the hybrid model variant of \cite{GHSW21}, logarithmic round algorithms have been given for several fundamental problems such as spanning trees, MIS, etc. However, it is left open whether an efficient algorithm can be designed for MST in the hybrid model \cite{GHSW21}.
 

\section{Preliminaries}\label{sec:preliminaries}
\subsection{Graph Definitions}
\label{subsec:definitions}

For any graph $G=(V,E)$, its (maximal) connected components are called \emph{clusters}. Moreover, for any node $v \in V$, the neighbors of $v$ are denoted by $N(v) = \{u \mid (u,v) \in E\}$ and the degree of $v$ by $d(v) = |N(v)|$. The \emph{volume} of any subset $S \subseteq V$ is defined as $vol(S) = \sum_{v \in S} d(v)$, and the \emph{edge cut} of $S$ as $E(S,V \setminus S) = \{(u,v) \mid (u,v) \in E, u \in S, v \notin S \}$.
The \emph{conductance} $\Phi(S)$ of any subset $S \subseteq V$ is defined as $\Phi(S) = |E(S,V \setminus S)| /\min\{vol(S), vol(V \setminus S)\} $. The conductance of graph $G$ is defined as $\Phi = \min_{S \subseteq V, S \neq \emptyset} \Phi(S)$. 
Finally, for any edge set $E' \subseteq V^2$, we say that $E'$ \emph{generates} the graph $G(E') = (V,E')$ is called the \emph{graph generated}. Note that $E'$ may include edges in $V^2 \setminus E$ (i.e., peer-to-peer edges when $G$ is the communication graph) and thus $G(E')$ is not the graph induced by $E'$.

A family of graphs $G_n$ on $n$ nodes is an \emph{expander family} if,
    for some constant $\alpha$ with $0 < \alpha < 1$, the conductance
    $\phi_n = \phi(G_n)$ satisfies $\phi_n \geq \alpha$ for all
    $n \geq n_0$ for some $n_0 \in \mathbb{N}$.

\subsection{Graph Sketches}\label{subsec:graph-sketch}

Consider an arbitrary set of nodes $V$, such that all nodes of $V$ have unique IDs in $[1,N]$, where $N$ is known to all nodes. 
We define, for any graph $G = (V,E)$ and node $v \in V$, the \emph{incidence vector} $\incVect(v) \in \mathbb{R}^{\binom{N}{2}}$ whose entries correspond to all possible choices of two IDs in $N$.  An entry in $\incVect(v)$  corresponding to the possible edge between $u$ with $id_u\in N$ and $v$ is $0$ if $(u,v) \notin E$, $1$ if $(u,v) \in E$ and $id_u > id_v$, and $-1$ otherwise (i.e., $(u,v) \in E$ and $id_u < id_v$). Entries in $\incVect(v)$ that correspond to a possible edge not including $v$ (e.g., $(u,w)$) have value $0$. 
Naturally, one can extend this definition to any node subset $S \subseteq V$: more precisely, $\incVect(S) = \sum_{v \in S} \incVect(v)$. Note that by linearity, the non-zero indices of $\incVect(S)$ indicate exactly which edges are in the cut of $S$ with respect to $G$, that is, in $E_G(S, V \setminus S)$. 

One may use the incidence vector of some node set $S$ to sample an outgoing edge from $S$, if one exists, uniformly at random from all such outgoing edges, i.e., sample a non-zero entry in $\incVect(S)$ uniformly at random. However, in a distributed setting, to compute the incidence vector on a set of nodes $S$, one would need to aggregate the incidence vectors of the nodes belonging to $S$.   
This is problematic since incidence vectors are exponentially larger (recall that they have size $\binom{N}{2}$) than our $O(\polylog N)$ message size. 

Fortunately, we can use a \emph{linear sketch} \cite{AhnGM12a} --- a well-chosen linear function from $\mathbb{R}^{\binom{N}{2}}$ to $\mathbb{R}^k$ --- or in other words, a \emph{graph sketching matrix}  --- a well-chosen $k \times \binom{N}{2}$ size matrix $M_G$ --- to compress these vectors $\left( \incVect(v) \right)_{v \in V}$ into smaller (sketch) vectors $\left( \sketchVect(v) \right)_{v \in V}$ of size $k = O(\polylog{\binom{N}{2}})$; more concretely, $M_G \cdot \incVect(v) = \sketchVect(v)$. 
Moreover, although this compression necessarily loses some information, it has two major advantages. First, it is possible to sample (almost) uniformly at random one of the non-zero indices of $\incVect(v) \in \mathbb{R}^{\binom{N}{2}}$ by performing operations on $\sketchVect(v)$ only (albeit with some small failure probability). More concretely, for any graph sketching matrix $M_G$ and for any subset $S \subseteq V$, there exists a sampling function $f_G$ that takes the sketch vector $\sketchVect(S)$ as input and outputs an edge chosen uniformly at random in $E_G(S,V \setminus S)$. 
Second, the linearity of the graph sketching matrix allows us to compute $\sketchVect(S)$ without computing $\incVect(S)$, but instead by 
computing the sketch vectors $\left( \sketchVect(v) \right)_{v \in S}$ and summing them. 

In summary, for any subset $S \subseteq V$, we can sample an edge chosen uniformly at random in $E_G(S,V \setminus S)$ by (i) having nodes agree on $\Theta(\log^2 n)$ true random bits that they can use to locally compute a common graph sketching matrix $M_G$ with polynomially bounded integer entries~\cite{JST11, PRS18}, (ii) aggregating the sketch vectors of all nodes in $S$, and (iii) applying the sampling function $f_G$ on the aggregate vector $\sketchVect(S)$. Note that these steps require messages of small $O(\polylog n)$ size only. 
A more formal statement is given below, which can be obtained straightforwardly by adapting known results~\cite{JST11, AhnGM12a, PRS18}.

\begin{lemma}\label{lem:graph-sketch-works}
    For any upper bound $N$ on the ID range and constant $0 < \delta < 1$, there exist a graph sketching matrix $M_G$ (with entries polynomially bounded in $N$) and a sampling function $f_G$ such that for any node subset $S \subset V$, the aggregate sketch vector $\sketchVect(S) = \sum_{u \in S} \sketchVect(u)$ can be represented using $O(\polylog n)$ bits, and $f_G(\sketchVect(S))$ samples a (uniformly) random edge in $E_G(S,V \setminus S)$ with probability $1 - \delta$.
\end{lemma}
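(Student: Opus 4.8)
The plan is to reduce the statement to the now-standard primitive of \emph{$\ell_0$-sampling} of a turnstile vector by a fixed linear sketch, and then to compose that primitive with the incidence-vector encoding so that a uniformly random non-zero coordinate of the aggregated vector is exactly a uniformly random cut edge. Two facts do the heavy lifting. The first is the algebraic cancellation property of incidence vectors: for any $S \subseteq V$, in $\incVect(S) = \sum_{v \in S} \incVect(v)$ every edge $(a,b)$ with both endpoints in $S$ contributes $+1$ at one endpoint and $-1$ at the other (by the ID-comparison sign convention — if $id_a > id_b$ then the entry is $-1$ in $\incVect(a)$ and $+1$ in $\incVect(b)$), so it cancels, whereas every edge with exactly one endpoint in $S$ survives. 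Hence the support of $\incVect(S)$ is precisely $E_G(S,V\setminus S)$, and a uniformly random non-zero coordinate of $\incVect(S)$ is a uniformly random cut edge. The second fact is that an $\ell_0$-sampler is itself a fixed linear map $M_G$, so $M_G \incVect(S) = \sum_{v \in S} M_G \incVect(v) = \sum_{v \in S} \sketchVect(v) = \sketchVect(S)$; this is exactly what lets nodes aggregate sketches additively without ever materializing the exponentially long incidence vectors.

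For the sampler itself I would invoke the linear $\ell_0$-sampling construction of~\cite{JST11} (see also~\cite{AhnGM12a, PRS18}) and recall its structure. One independently subsamples the coordinate set at geometric rates $2^{-j}$ for $j = 0, 1, \dots, \lceil \log_2 \binom{N}{2} \rceil$; at each level one keeps a constant-size linear ``$1$-sparse recovery'' sketch consisting of the value sum $\sum_i x_i$, the index-weighted value sum $\sum_i i\, x_i$, and a polynomial fingerprint used to test whether exactly one non-zero coordinate survived. With constant probability some level isolates a single surviving coordinate, from which the index (the cut edge) and its value are recovered by a division, while the fingerprint rejects false positives except with polynomially small probability. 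Repeating $\Theta(\log(1/\delta))$ independent copies and returning the first successful recovery boosts the success probability to $1-\delta$. Since the whole data structure is a composition of subsampling (a $0/1$ diagonal selection), hashing, and summation, it is a single linear map whose entries are polynomially bounded integers, and the required randomness can be supplied by bounded-independence hash families — which is precisely where the $\Theta(\log^2 n)$ shared random bits of the ambient protocol are consumed~\cite{PRS18}, so that all nodes locally compute the same $M_G$.

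It then remains to verify the two quantitative claims. The ambient dimension is $\binom{N}{2} = O(N^2)$, and since $N = n^c$ we have $\log \binom{N}{2} = O(\log n)$; the sampler stores $O(\log^2 \binom{N}{2} \cdot \log(1/\delta)) = O(\polylog n)$ words, each of $O(\log n)$ bits because the entries are polynomially bounded, so $\sketchVect(S)$ is representable in $O(\polylog n)$ bits as claimed. The recovered coordinate is a random element of the support of $\incVect(S)$, which by the cancellation property is a random edge of $E_G(S,V\setminus S)$, and this succeeds with probability $1-\delta$. I expect the only delicate point — and the step I would spell out most carefully — to be the uniformity guarantee: one must check that conditioning on successful isolation does not bias \emph{which} non-zero coordinate is returned (standard $\ell_0$-samplers deliver either exact or $(1\pm\epsilon)$-approximate uniformity depending on the variant, which is why the statement hedges with ``(uniformly)''), and that the fingerprint's false-positive event and the per-level isolation-failure events do not skew the conditional distribution over cut edges. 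Everything else is a routine composition of the cited results, which is exactly why the lemma can be obtained by adapting~\cite{JST11, AhnGM12a, PRS18}.
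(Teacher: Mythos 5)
Your proposal is correct and follows essentially the same route as the paper, which does not spell out a proof but explicitly obtains the lemma ``by adapting known results''~\cite{JST11, AhnGM12a, PRS18}: the cancellation property of the signed incidence vectors reduces cut-edge sampling to $\ell_0$-sampling of $\incVect(S)$, and the linear $\ell_0$-sampler of~\cite{JST11} supplies $M_G$, the size bound, and the $1-\delta$ success probability. Your explicit attention to the uniformity-under-conditioning issue and to where the $\Theta(\log^2 n)$ shared random bits are consumed is a welcome elaboration of details the paper leaves implicit, but it is not a departure from its approach.
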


\subsection{Creating Expanders from Bounded Degree Graphs}
\label{subsec:createExpander}

To make our work self-contained, we briefly describe Procedure \CreateExpander{} --- the overlay construction algorithm from \cite{GHSW21} --- and its guarantees here. For any constant $\Phi \in (0,1/2]$, integer $d \geq 1$ and $d$-bounded degree graph $G = (V,E)$, Procedure \CreateExpander{} first performs some preprocessing on the graph $G$ to transform it into an $O(\log n)$-regular benign graph $H_0$. (A formal definition of benign graphs is given below, but roughly speaking, these are graphs on which random walks have some good properties.) After which, starting from the edges of $H_0$, Procedure \CreateExpander{} computes in each phase a new set of edges generating a graph with twice as much conductance --- up to 1/2. After some $O(\log n)$ phases, Procedure \CreateExpander{} terminates and outputs an $O(\log n)$-regular expander graph with conductance $\Phi$.

\begin{lemma}[\cite{GHSW21} with an $O(\log n)$ overhead in the \textsf{P2P-GOSSIP} model]
\label{lem:GotteResult}
For any constant $\Phi \in (0,1/2]$, integer $d \geq 1$ and $d$-bounded degree graph $G = (V,E)$, Procedure \CreateExpander{} uses $O(\log^2 n)$ rounds and $O(n \log^2 n)$ messages to output an $O(\log n)$-regular graph $H'=(V,E_H')$ with conductance $\Phi$.
\end{lemma}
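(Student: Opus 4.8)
The plan is to treat this as a simulation result layered on top of the existing guarantee of G\"otte et al.~\cite{GHSW21}. First I would recall their result in its native model --- a hybrid/\CONGEST-style setting where in each round a node may communicate simultaneously along all of its incident edges: given a $d$-bounded degree graph, Procedure \CreateExpander{} runs in $O(\log n)$ rounds and outputs an $O(\log n)$-regular expander of conductance $\Phi$, and moreover every graph maintained during the execution (starting from the preprocessed $O(\log n)$-regular benign graph $H_0$) has degree $O(\log n)$. The entire content of this lemma is then to show that each such native round can be emulated in the \textsf{P2P-GOSSIP} model at a cost of an $O(\log n)$ multiplicative slowdown, and to count the resulting messages.

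The simulation itself is straightforward. The only discrepancy between the two models is that in the native model a node sends to all neighbors at once, whereas in \textsf{P2P-GOSSIP} a node initiates contact with exactly one neighbor per round. Since the maximum degree throughout is $O(\log n)$, I would simulate one native round by having each node serially contact its (at most $O(\log n)$) neighbors one at a time, delivering the corresponding small ($O(\polylog n)$-bit) messages; this uses $O(\log n)$ gossip rounds. The constant-length random walks that drive the conductance amplification in \cite{GHSW21} fit this template as well: a single walk step is the current token-holder forwarding its token to a uniformly random neighbor, which is precisely one gossip contact, and a node holding up to $O(\log n)$ tokens forwards them over $O(\log n)$ gossip rounds. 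As the walks have constant length, this adds only a constant factor. Hence the $O(\log n)$ native rounds are realized in $O(\log n) \cdot O(\log n) = O(\log^2 n)$ gossip rounds.

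For the message bound I would count per gossip round: each node initiates at most one contact carrying an $O(\polylog n)$-bit message, so each gossip round costs $O(n)$ messages, and over $O(\log^2 n)$ gossip rounds this totals $O(n \log^2 n)$ messages. Equivalently, each native round costs $O(n \log n)$ messages since each node sends $O(1)$ messages per incident edge and the degree is $O(\log n)$, and there are $O(\log n)$ native rounds.

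The hard part is not the round/message arithmetic but verifying the invariant that the degree of every node stays $O(\log n)$ at all intermediate steps --- including the preprocessing that transforms the input $d$-bounded graph into the benign regular graph $H_0$ --- since the entire $O(\log n)$ per-round overhead rests on this bound. I would also need to check that the asymmetry of gossip (a node may initiate only one contact per round but may receive arbitrarily many messages) does not obstruct the random-walk emulation: forwarding tokens requires only that each node \emph{initiate} contacts, which serializes cleanly, while a node receiving several incoming tokens in the same round is explicitly permitted by the model. Once the degree invariant and this token-forwarding emulation are in hand, the stated bounds follow immediately from the $O(\log n)$ slowdown.
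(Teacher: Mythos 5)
Your proposal is correct and matches the paper's treatment: the paper imports the $O(\log n)$-round, $O(\log n)$-degree guarantee of \cite{GHSW21} and observes that each phase of Procedure \IncreaseExpansion{} --- in which every node must dispatch and forward $\Theta(\log n)$ constant-length random-walk tokens while sending only one message per round --- incurs a $\Theta(\log n)$ slowdown, giving $O(\log^2 n)$ rounds and $O(n\log^2 n)$ messages overall. The degree invariant you flag is exactly what the $(\Delta_H,\Lambda)$-benign-graph property (with $\Delta_H=\Theta(\log n)$, maintained by capping accepted tokens at $3\Delta_H/8$ and padding with self-loops) provides, so your argument is the same as the paper's.
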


Next, we provide more details about the phases of Procedure \CreateExpander{}. First, let $\Delta_H = \Delta_H(d) = \Theta(\log n)$ and $\Lambda = \Theta(\log n)$ be two parameters chosen via the analysis. We call $H = (V,E_H)$, a multi-graph composed of peer-to-peer (multi-)edges, a \emph{$(\Delta_H,\Lambda)$-benign graph} if it holds that (i) $H$ is $\Delta_H$-regular, that is, every node of $H$ has $\Delta_H$ incident (multi-)edges, (ii) $H$ is lazy, that is, every node of $H$ has at least $\Delta_H/2$ self-loops, and (iii) the minimum cut of $H$ is at least $\Lambda$. 
Then, Procedure \IncreaseExpansion{} --- implementing one phase of Procedure \CreateExpander{} --- is run on a $(\Delta_H,\Lambda)$-benign graph and outputs a $(\Delta_H,\Lambda)$-benign graph with better conductance. More precisely, in Procedure \IncreaseExpansion{}, each node initiates $\Delta_H/8$ random walks of length $\ell = \Theta(1)$. These random walks take $\Theta(\log n)$ rounds to terminate (since nodes may only send 1 message per round in the \textsf{P2P-GOSSIP} model). After these $O(\log n)$ rounds, each node generates, for each token it holds, an edge to the token's originating node. If a node holds more than $3\Delta_H/8$ tokens, then that node randomly chooses $3\Delta_H/8$ tokens without replacement and creates edges accordingly. Finally, each node adds self-loops until it has $\Delta_H$ incident edges.

\begin{lemma}[\cite{GHSW21}]
\label{lem:expansionIncrease}
For any $(\Delta_H, \Lambda)$-benign graph $H = (V,E_H)$ with conductance $\Phi$, Procedure \IncreaseExpansion{} uses $O(\log n)$ rounds and $O(n \log n)$ messages to output a $(\Delta_H, \Lambda)$-benign graph $H'=(V,E_H')$ with conductance $\Phi' \geq \min\{ (\Phi \sqrt{\ell})/640, 1/2\}$. For $\ell \geq (2 \cdot 640)^2$, it holds that $\Phi' \geq \min\{ 2 \Phi, 1/2\}$.
\end{lemma}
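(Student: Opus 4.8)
The plan is to decompose the statement into three essentially independent pieces: (i) the round and message bounds, (ii) preservation of the three benign invariants (regularity, laziness, and a min-cut of at least $\Lambda$), and (iii) the conductance improvement $\Phi' \geq \min\{\Phi\sqrt{\ell}/640, 1/2\}$, which is the mathematical heart. Parts (i) and (ii) are bookkeeping; part (iii) is where the real work lies.

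For the complexity, each of the $\Delta_H/8 = \Theta(\log n)$ tokens per node performs a walk of constant length $\ell$. A single walk step requires a node to forward each token it currently holds, and since the \textsf{P2P-GOSSIP} model permits one message per round while a node holds $O(\Delta_H) = O(\log n)$ tokens w.h.p.\ (its holdings concentrate around the uniform-stationary value $\Delta_H/8$), each step costs $O(\log n)$ rounds and the whole walk phase costs $O(\log n)$ rounds. Summed over all $n\Delta_H/8 = O(n\log n)$ tokens, each traveling $O(1)$ hops, the message count is $O(n\log n)$. For the invariants, regularity and the self-loop count hold by construction: each node contributes at most $\Delta_H/8$ incident edges as a walk origin and at most $3\Delta_H/8$ as a token holder, hence at most $\Delta_H/2$ non-loop edges, forcing at least $\Delta_H/2$ self-loops before padding up to $\Delta_H$. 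The min-cut bound $\ge \Lambda$ requires a concentration argument: the expected number of new edges across any cut is large, and a Chernoff bound together with a union bound over cuts shows no cut falls below $\Lambda$ w.h.p.

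The core is (iii), which I would reduce to a statement about a single fixed cut. Fix $S$ with $vol(S) \le vol(V\setminus S)$ and let $\Phi_S = |E_H(S,V\setminus S)|/vol(S) \ge \Phi$ be its conductance in $H$. Because $H$ is $\Delta_H$-regular, every walk starts from the uniform (hence stationary) distribution, so each of the $|S|\cdot\Delta_H/8$ walks originating in $S$ lands in $V\setminus S$ with probability equal to the $\ell$-step escape probability from $\pi|_S$. The crux is the estimate that this escape probability is $\Omega(\min\{\Phi_S\sqrt{\ell},1\})$: the $\sqrt{\ell}$ (rather than the naive $\ell$) arises because mass diffuses across the boundary at the diffusive scale $\sqrt{\ell}$, the cycle being the extremal case where a length-$\ell$ lazy walk reaches only within $\Theta(\sqrt\ell)$ of the cut. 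Granting this, the number of new edges crossing $S$ is in expectation at least $(\Delta_H/8)\,|S|\cdot \Omega(\Phi_S\sqrt\ell)$, while $vol_{H'}(S) = \Delta_H|S|$ (self-loops inflate the volume but add no crossing edges), giving per-cut conductance $\Omega(\Phi_S\sqrt\ell) \ge \Omega(\Phi\sqrt\ell)$; tracking constants yields $\Phi\sqrt\ell/640$, capped at $1/2$. The consequence for $\ell \ge (2\cdot 640)^2$ is then immediate.

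I expect two obstacles to dominate. The first is the escape-probability estimate with the correct $\sqrt\ell$ dependence: a spectral argument through Cheeger's inequality would only deliver $\Omega(\ell\Phi^2)$, since lower-bounding the spectral gap of $H$ by $\Omega(\Phi^2)$ already loses a square root, and this is strictly weaker than $\Phi\sqrt\ell$ in the regime $\Phi\sqrt\ell < 1$; one must therefore argue directly about the heat kernel / evolving-set process rather than through the spectral gap. The second is converting the per-cut, in-expectation bound into a statement holding \emph{simultaneously} for all $2^n$ cuts: this is the delicate union bound, and it is precisely where the benign min-cut guarantee $\Lambda = \Theta(\log n)$, the laziness, and the concentration of token holdings (ensuring the $3\Delta_H/8$ cap rarely discards crossing edges) are needed to make the tail probabilities small enough to survive the union over cuts.
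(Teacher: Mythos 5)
This lemma is not proved in the paper at all: it is quoted verbatim from \cite{GHSW21} as an imported black-box guarantee (the paper explicitly says it only ``briefly describe[s]'' Procedure \IncreaseExpansion{} ``to make our work self-contained''), so there is no in-paper proof to compare your attempt against. What you have written is a reconstruction of the argument in the cited work, and as a high-level outline it is a faithful one: the complexity accounting and the benign-invariant bookkeeping match the procedure as described, and you correctly identify that the heart of the matter is the $\Omega(\min\{\Phi_S\sqrt{\ell},1\})$ escape probability for an $\ell$-step lazy walk --- a bound that cannot be obtained through Cheeger's inequality and the spectral gap (which loses a square root and yields only $\Omega(\ell\Phi^2)$) but must instead come from a direct statement about conductance of lazy graph powers, and that the union bound over all $2^n$ cuts is exactly what the min-cut condition $\Lambda = \Theta(\log n)$ in the definition of a benign graph is there to enable.

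That said, as a proof your proposal is not complete, and you concede as much: the two ``obstacles'' you flag at the end are not loose ends but the entire technical content of the lemma. The escape-probability estimate with the correct $\sqrt{\ell}$ dependence is a nontrivial result in its own right (in \cite{GHSW21} it rests on a known lower bound for the expansion of powers of lazy graphs), and the simultaneous-over-all-cuts concentration step requires a careful interplay between the per-cut expectation, the $3\Delta_H/8$ capping rule (which can discard crossing edges and must be shown to do so rarely), and the $\Lambda$ lower bound on every cut so that Chernoff tails of the form $e^{-\Omega(\Lambda)} = n^{-\Omega(1)}$ beat the union over cuts of each size. Neither step is carried out. So the proposal correctly locates where the difficulty lies and sketches the right route, but it does not constitute a proof; for the purposes of this paper that is moot, since the paper itself takes the lemma on faith from the cited reference.
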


\section{Our Primitives}\label{sec:primitives}

In this section, we develop two primitives (that can be of independent interest) that we subsequently use in our main algorithm.  The first primitive, in Section~\ref{subsec:degreeReductionExpander}, allows us to modify an $O(\log n)$-regular expander graph into an $O(1)$-bounded degree expander. The second primitive, in Section~\ref{subsec:push-sum}, allows one to construct an aggregate sketch vector in an $O(1)$-bounded degree expander using a gossip-style approach, assuming that each node initially contains its own sketch vector.

\subsection{Degree Reduction for Expanders}
\label{subsec:degreeReductionExpander}

Consider a high conductance $O(\log n)$-regular expander graph $G$. 
We give a procedure, called Procedure \ExpanderDegreeReduction{}, that sparsifies this expander graph into a $\delta$-bounded degree expander graph $G_{\delta}$ with any desired conductance $\Phi \in (0,1/10]$, in $O(\log^3 n)$ rounds and $O(n \log^3 n)$ messages, where $\delta = O(1)$ is some integer determined by the analysis. 

Initially, each node generates $c$ active tokens with its ID, where $c$ is a positive integer such that $c < \delta/10$. These tokens will be used to generate the edges of the procedure's resulting graph $G_{\delta}$. The algorithm works in $O(\log n)$ phases, where the precise value is determined by the analysis. In each phase, active tokens take random walks of length $\ell = O(\log n)$ per phase --- where $\ell$ is larger than the mixing time on $G$. (More concretely, each node that holds an active token sends the token to a neighbor chosen uniformly at random.\footnote{We show later that sending these tokens in our model incurs an overhead of $O(\log n)$.}) At the end of each phase, any token that ends at some node with at most $\delta$ tokens becomes inactive and remains at that node for the remainder of the algorithm. (If during some phase too many tokens end at a node, such that it holds strictly more than $\delta$ tokens, then for simplicity all these tokens remain active.) Moreover, for each token that becomes inactive, the node holding it creates a temporary (1-round) edge to inform its source (of the inactive token). A node is said to be \emph{satisfied} once all of its generated tokens are inactive. Once all nodes are satisfied, each satisfied node creates one edge (in $G_{\delta}$) to each node holding at least one of its inactive tokens.  

\noindent \textbf{Analysis.}
We start with a simple invariant obtained via counting argument (see Lemma \ref{lem:goodNodeInvariant}). With this invariant, we can show that the number of active tokens reduces by half in each phase with constant probability (see Lemma \ref{lem:nonAcceptedTokensDecrease}). As a result, we can show that all nodes become satisfied within $O(\log n)$ phases (see Lemma \ref{lem:satisfiedNodes}), or in other words, before the algorithm terminates. 

\begin{lemma}
\label{lem:goodNodeInvariant}
For any given phase, fix the random walks of all tokens except one. Then, it holds that at least $0.9n$ nodes have strictly less than $\delta$ tokens.
\end{lemma}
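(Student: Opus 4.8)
The plan is to prove this by a pure counting (averaging) argument that is essentially oblivious to the random walks. The key observation is that the \emph{total} number of tokens in the system is conserved: each of the $n$ nodes generates exactly $c$ tokens at the outset, tokens are never destroyed (they only move during random walks and eventually freeze in place when they become inactive), and hence at every point in time---in particular at the moment within the given phase referred to in the statement---the token counts summed over all nodes equal exactly $cn$. I would stress that fixing the random walks of all but one token does not affect this total: it merely pins down the locations of $cn-1$ of the tokens and leaves at most one unplaced, so the tokens distributed among the nodes number at most $cn$ regardless of the single free token's trajectory.

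First I would let $B$ denote the set of \emph{crowded} nodes, i.e., those holding at least $\delta$ tokens (counting the fixed tokens). Since each node of $B$ contributes at least $\delta$ to the conserved total, we get $\delta \cdot |B| \le cn$, hence $|B| \le cn/\delta$. Plugging in the calibration $c < \delta/10$ baked into the construction yields $|B| \le cn/\delta < n/10$, so strictly fewer than $0.1n$ nodes are crowded. Consequently at least $0.9n$ nodes hold strictly fewer than $\delta$ tokens, which is precisely the claimed invariant.

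I would also explain why the statement is phrased with ``fix the random walks of all tokens except one,'' even though this conditioning plays no role in the count itself: the formulation is tailored to its downstream use in Lemma~\ref{lem:nonAcceptedTokensDecrease}, where one argues (via the near-uniform mixing of the free token's length-$\ell$ walk on the expander) that the single free token lands on one of these $\ge 0.9n$ roomy nodes with constant probability. The strict inequality ``$< \delta$'' here is calibrated precisely against the ``$\le \delta$'' threshold for a token becoming inactive: a node carrying at most $\delta - 1$ fixed tokens still holds at most $\delta$ tokens after the free token arrives, so that token indeed becomes inactive.

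There is no genuine obstacle here, only bookkeeping to get right. I would make explicit that inactive tokens continue to be tallied toward a node's count, so the sum stays $cn$ across phases, and that whether or not one includes the single free token in the tally only strengthens the bound (to $\le (cn-1)/\delta$), leaving the conclusion intact. Since this is a pigeonhole bound with no dependence on the expander structure or on the random-walk distribution, I expect no technical difficulty beyond this accounting.
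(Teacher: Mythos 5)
Your proof is correct and uses essentially the same counting/pigeonhole argument as the paper: the total token count is $cn$, so at most $cn/\delta < n/10$ nodes can hold $\delta$ or more tokens, given the calibration $c < \delta/10$. The paper phrases this as a proof by contradiction, but the substance is identical, and your additional remarks about the role of the fixed walks and the one free token are accurate bookkeeping.
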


\begin{proof}
We use a simple counting argument to show the lemma statement. Assume by contradiction that more than $n/10$ nodes have more than $\delta$ tokens. By the algorithm definition, there can only be $c \cdot n$ tokens in total, and thus $c \cdot n \geq \delta \cdot n / 10$. However, since $c < \delta/10$, $c \cdot n < \delta \cdot n /10 $, which leads to a contradiction.
\end{proof}

\begin{lemma}
\label{lem:nonAcceptedTokensDecrease}
In any given phase, the number of active tokens reduces by half with probability at least $1/2$.
\end{lemma}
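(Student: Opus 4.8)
The plan is to bound, for a single active token, the probability that it fails to become inactive during the phase, and then aggregate these bounds via linearity of expectation and Markov's inequality. First I would fix attention on an arbitrary active token $t$ and condition on the random walks taken by all the \emph{other} tokens in this phase. Under this conditioning the set of nodes holding strictly fewer than $\delta$ tokens (counting only the fixed tokens) is deterministic, and by Lemma~\ref{lem:goodNodeInvariant} it has size at least $0.9n$; call these nodes \emph{good}. The key observation is that if $t$ lands on a good node $v$, then $v$ ends the phase with at most $\delta$ tokens (the fewer-than-$\delta$ fixed tokens plus $t$ itself), so by the inactivation rule every token at $v$ — in particular $t$ — becomes inactive. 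Note that landing on a good node can never trigger the ``too many tokens'' exception, precisely because such a node accumulates at most $\delta$ tokens once $t$ arrives.

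Next I would invoke the mixing property of the walk to show that $t$ lands on a good node with a constant probability bounded well away from zero. Since $G$ is an $O(\log n)$-regular expander, the (lazy) random walk has the uniform stationary distribution, and because the walk length $\ell$ exceeds the mixing time, the distribution of $t$'s endpoint is within total variation distance $\epsilon$ of uniform, where $\epsilon$ can be made a small constant (say $\epsilon \leq 0.1$) while keeping $\ell = O(\log n)$. Hence, conditioned on the other walks, $t$ lands on a good node — and therefore becomes inactive — with probability at least $0.9 - \epsilon \geq 0.8$. Since this lower bound holds for \emph{every} fixed configuration of the remaining walks, averaging over them gives the unconditional bound $\Prob{t \text{ remains active}} \leq 0.2$.

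Finally, letting $A$ denote the number of tokens active at the start of the phase and $X$ the number still active at the end, linearity of expectation gives $\E[X] = \sum_{t} \Prob{t \text{ remains active}} \leq 0.2\,A$ (no independence across tokens is needed here). Markov's inequality then yields $\Prob{X \geq A/2} \leq \E[X]/(A/2) \leq 0.4$, so $X \leq A/2$ holds with probability at least $0.6 \geq 1/2$, which is exactly the claimed halving.

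The step I expect to be the main obstacle is making the decoupling rigorous: whether $t$ becomes inactive depends on where both $t$ and the other tokens land, and these interact through the shared $\delta$-threshold at each node. Conditioning on the other walks and invoking Lemma~\ref{lem:goodNodeInvariant} cleanly removes this dependence, but one must check that the near-uniform landing bound from mixing survives the conditioning. It does, because the tokens take mutually independent random walks, so conditioning on the others leaves $t$'s endpoint distribution unchanged; the only subtlety is to confirm that the invariant of Lemma~\ref{lem:goodNodeInvariant} is applied to the correct (fixed) token configuration, which it is by construction.
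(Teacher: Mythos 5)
Your proof is correct and follows essentially the same route as the paper's: condition on all other tokens' walks, apply Lemma~\ref{lem:goodNodeInvariant} to obtain a deterministic set of at least $0.9n$ good nodes, use the mixing of the walk to bound the probability of landing outside that set, and finish with linearity of expectation and Markov's inequality. The only cosmetic difference is that you phrase the near-uniformity via total variation distance while the paper uses a pointwise bound $\Pr[\V_i = v] \leq (1+\varepsilon)/n$ plus a union bound over the at most $0.1n$ bad nodes; both yield a constant failure probability below $1/4$, so Markov's inequality gives the claimed halving either way.
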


\begin{proof}
    Let $k \leq c \cdot n$ be the initial number of active tokens in this phase. These tokens, denoted by $t_1,\ldots,t_k$, each take an $\ell$-length random walk. For any $i \in \{1,\ldots,t\}$, let $\V_i$ be the random variable denoting the node the $i$-th token ends at. Note that $\ell$ is chosen sufficiently greater than the mixing time on $G$, thus the $i$-th token ends at an (almost) uniform random node. More formally, for any node $v \in V$, $\Pr[\V_i = v] \in [1/n - 1/n^a, 1/n + 1/n^a]$ for some constant $a \geq 1$. Then, the following rough upper bound holds: for some constant $\varepsilon \in (0,1)$, for any node $v \in V$, $\Pr[\V_i = v] \leq (1+\varepsilon)/n$. 

    Next, for any token $t_i$, let $R(t_i) = (X_0,\ldots,X_\ell)$ denote the random walk of $t_i$ (i.e., the sequence of nodes visited by token $t_i$) and let $\Ind_i$ indicate that $t_i$ ends the phase as active. We use $\bar{R}(t_i)$ as shorthand for $\left( R(t_1),\ldots,R(t_{i-1}),R(t_{i+1}),\ldots,R(t_k)\right)$, or in other words, to denote all tokens' random walks except that of token $t_i$. Note that if we fix all random walks except that of $t_i$ --- i.e., if we fix $\bar{R}(t_i)$ --- then Lemma \ref{lem:goodNodeInvariant} implies that there exists a set $G(\bar{R}(t_i))$ of nodes, each with strictly less than $\delta$ tokens, and such that $|G(\bar{R}(t_i))| \geq 0.9 n$. 
    If $t_i$ ends up at any nodes of $G(\bar{R}(t_i))$, it becomes inactive. Thus, $\Pr[\Ind_i = 1 \mid \bar{R}(t_i)] \leq \Pr[\V_i \notin G(\bar{R}(t_i)) \mid \bar{R}(t_i)] \leq \Pr[\bigvee_{v \notin G(\bar{R}(t_i))} \V_i = v \mid \bar{R}(t_i)] \leq (1+\varepsilon)/10$, where the last inequality is obtained through union bound. Consequently, $\Pr[\Ind_i = 1] \leq (1+\varepsilon)/10$.
    
    Let the random variable $A = \sum_{i=1}^k \Ind_i$ denote the number of active tokens when the phase ends. By the above inequality, $\E[A] =  \sum_{i=1}^k \Pr[\Ind_i = 1] \leq (1+\varepsilon) k / 10$. Finally, Markov's inequality implies the lemma statement: that is, $\Pr[A \geq 0.5 k] \leq \Pr[A \geq 2 \E[A]] \leq 1/2$. 
\end{proof}

\begin{lemma}
\label{lem:satisfiedNodes}
After $O(\log n)$ phases, all nodes are satisfied.
\end{lemma}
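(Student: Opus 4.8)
The plan is to track the total number of active tokens across phases and show that it reaches zero after $O(\log n)$ phases w.h.p., at which point every generated token is inactive and hence, by definition, every node is satisfied. The first thing I would observe is that the number of active tokens is \emph{monotone non-increasing}: once a token becomes inactive it remains inactive for the rest of the algorithm (by the algorithm's definition), and no phase creates new active tokens. This monotonicity is what allows a fixed budget of ``good'' phases, scattered arbitrarily among possibly many bad ones, to still drive the count to zero.

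Let $A_t$ denote the number of active tokens at the start of phase $t$, so that $A_1 = c n = O(n)$ since $c = O(1)$. Call phase $t$ \emph{successful} if $A_{t+1} < A_t / 2$. By Lemma~\ref{lem:nonAcceptedTokensDecrease}, for any configuration at the start of phase $t$ --- equivalently, conditioned on the entire history through phase $t-1$ --- phase $t$ is successful with probability at least $1/2$; the key point is that the Markov-inequality argument in that lemma depends only on the current number of active tokens, not on how the configuration arose. Because the active count never increases, after $s$ successful phases among the first $t$ we have $A_{t+1} \le c n / 2^{s}$. It therefore suffices to accumulate $s^\ast := \lceil \log_2(cn) \rceil + 1 = O(\log n)$ successful phases, since then $A \le c n / 2^{s^\ast} < 1$, which forces $A = 0$ (as $A$ is a non-negative integer).

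It then remains to bound the number of phases needed to collect $s^\ast$ successes. Let $X_t$ be the indicator that phase $t$ is successful. The conditional bound $\Pr[X_t = 1 \mid \mathcal{F}_{t-1}] \ge 1/2$, where $\mathcal{F}_{t-1}$ denotes the history through phase $t-1$, shows that $\sum_{t=1}^{T} X_t$ stochastically dominates a $\mathrm{Binomial}(T, 1/2)$ random variable (equivalently, $\sum_{t \le T}(X_t - 1/2)$ is a submartingale with bounded increments). Running the algorithm for $T = \Theta(\log n)$ phases, with a large enough leading constant so that $T/4 \ge s^\ast$, a Chernoff bound gives $\Pr\big[\sum_{t \le T} X_t < s^\ast\big] \le \Pr\big[\mathrm{Bin}(T,1/2) < T/4\big] \le e^{-\Omega(T)} = n^{-\Omega(1)}$, and enlarging the constant makes this at most $n^{-c}$ for any desired $c$. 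Hence w.h.p.\ at least $s^\ast$ phases are successful within $T = O(\log n)$ phases, the active count reaches $0$, and all nodes are satisfied.

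The main obstacle is that the per-phase halvings are \emph{not} independent: Lemma~\ref{lem:nonAcceptedTokensDecrease} only provides a conditional success probability that depends on the evolving token configuration, so one cannot simply invoke a Chernoff bound for independent trials. The correct way to handle this is the stochastic-domination / submartingale argument sketched above, which converts the conditional lower bound of Lemma~\ref{lem:nonAcceptedTokensDecrease} into a clean concentration statement without assuming independence. The only other point requiring care is the monotonicity of the active count, which ensures that unsuccessful intermediate phases never undo the progress made by the $s^\ast$ successful ones.
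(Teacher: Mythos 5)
Your proof is correct and follows essentially the same route as the paper's: define a phase as successful if the active-token count halves, invoke Lemma~\ref{lem:nonAcceptedTokensDecrease} for the per-phase success probability of $1/2$, and apply a Chernoff-type bound to get $\Theta(\log n)$ successes within $O(\log n)$ phases. The only difference is that you explicitly justify the concentration step via stochastic domination over the dependent phase outcomes and note the monotonicity of the active count --- details the paper's terse ``simple application of Chernoff bounds'' leaves implicit --- which is a welcome strengthening rather than a departure.
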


\begin{proof}
    A phase is said to be successful if the number of active tokens reduces by half. By Lemma \ref{lem:nonAcceptedTokensDecrease}, each phase is successful with probability at least $1/2$. A simple application of Chernoff bounds imply that there are at least $\log (c \cdot n)$ successful phases after large enough $O(\log n)$ phases. Consequently, after $O(\log n)$ phases, there remain no active tokens, and hence no unsatisfied nodes.
\end{proof}

Now, we can show the correctness of the primitive, and bound its round and message complexities, in Theorem \ref{thm:expanderDegreeReductionGuarantees}. 

\begin{lemma}
\label{lem:probaInequality}
For any constant $\Phi \in (0,1/10]$ and any two integers $n,s \geq 1$ such that $s \leq n/2$, $\binom{n}{s} \binom{cs}{(1-\Phi) c s} ((1+\varepsilon)s/n)^{(1-\Phi) c s} \leq 1/n^2$ for large enough $n$ and some suitably chosen integer $c \geq 1$ and constant $\epsilon \geq 0$.
\end{lemma}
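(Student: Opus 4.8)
The plan is to read the left-hand side as the outcome of a union bound and then control it with crude binomial estimates. Interpreting the terms: $\binom{n}{s}$ ranges over all candidate subsets $S$ of size $s$; since each of the $s$ nodes of $S$ launches $c$ tokens (a total of $cs$ tokens) and each token lands inside $S$ independently with probability at most $(1+\varepsilon)s/n$ (because $s\le n/2$ and a walk of length $\ell$ mixes to near-uniform), the factor $\binom{cs}{(1-\Phi)cs}\,((1+\varepsilon)s/n)^{(1-\Phi)cs}$ bounds the probability that at least $(1-\Phi)cs$ of these tokens fail to leave $S$. So the inequality is exactly the statement that the probability of any size-$s$ set being ``bad'' (too few crossing tokens) is at most $n^{-2}$, after which one sums over $s$ in the main degree-reduction argument.

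First I would apply the standard bounds $\binom{n}{s}\le (en/s)^s$ and $\binom{cs}{(1-\Phi)cs}=\binom{cs}{\Phi cs}\le (e/\Phi)^{\Phi cs}$. Substituting and collecting the powers of $s/n$ (the exponent $-s$ coming from $\binom{n}{s}$ combines with the exponent $(1-\Phi)cs$ from the last factor) rewrites the whole product in the compact form $\bigl[\,C\,(s/n)^{\beta}\,\bigr]^{s}$, where $C = e\,(e/\Phi)^{\Phi c}(1+\varepsilon)^{(1-\Phi)c}$ and $\beta = (1-\Phi)c-1$. Taking $-\ln$, the goal becomes showing $f(s):=s\bigl(\beta\ln(n/s)-\ln C\bigr)\ge 2\ln n$ for all integer $s\in[1,n/2]$.

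The key simplification, and what makes a single choice of $c$ work uniformly in $s$, is that $f$ is concave: $f''(s) = -\beta/s < 0$ once $\beta>0$. Hence the minimum of $f$ over $[1,n/2]$ is attained at an endpoint, so it suffices to check $s=1$ and $s=n/2$. At $s=1$ we get $f(1)=\beta\ln n-\ln C$, which exceeds $2\ln n$ for large $n$ provided $\beta>2$, i.e.\ $(1-\Phi)c>3$. At $s=n/2$ we get $f(n/2)=(n/2)(\beta\ln 2-\ln C)$, which grows like $\Theta(n)$ and beats $2\ln n$ provided $C<2^{\beta}$.

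It remains to choose the parameters so that both endpoint conditions hold simultaneously, and this is where the hypothesis $\Phi\le 1/10$ enters. Dividing $\log_2 C<\beta$ by $c$ and sending $c\to\infty$, the condition reduces at leading order to $\Phi\log_2(e/\Phi)+(1-\Phi)\log_2(1+\varepsilon)<(1-\Phi)$; with $\varepsilon$ taken small (or $0$) and $\Phi\le 1/10$ one checks that $\Phi\log_2(e/\Phi)<(1-\Phi)$ holds with room to spare, so $C<2^{\beta}$ for all large $c$, while simultaneously $\beta=(1-\Phi)c-1>2$. I would finish by fixing such a constant integer $c$ (also respecting the earlier constraint $c<\delta/10$) and a small constant $\varepsilon\ge0$, so that both endpoint bounds hold for all $n$ beyond some threshold. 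The one genuinely delicate point is this uniform-in-$s$ control; the concavity observation is what reduces it from an interval optimization to two endpoint checks, and the balance between the combinatorial factor $(e/\Phi)^{\Phi c}$ and the gain $2^{(1-\Phi)c}$ is precisely what forces the restriction $\Phi\le 1/10$.
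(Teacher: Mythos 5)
Your proof is correct, but it follows a genuinely different (and in fact leaner) route than the paper's. The paper splits into two regimes: for $s=o(n)$ it uses the crude bound $\binom{cs}{(1-\Phi)cs}\le (e/(1-\Phi))^{(1-\Phi)cs}$, and because that estimate is too lossy when $s=\Theta(n)$ (its exponent scales like $(1-\Phi)cs\log(e/(1-\Phi))$, which overwhelms the gain $\beta s\log(n/s)=\Theta(s)$ at $s=n/2$), it switches to the Stirling/entropy bound $\binom{y}{x}\le 2^{yH(x/y)}$ together with the inequality $H(1-\Phi)\le 2\sqrt{\Phi(1-\Phi)}\le 0.6$ for the linear regime. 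You avoid the case split entirely by exploiting the symmetry $\binom{cs}{(1-\Phi)cs}=\binom{cs}{\Phi cs}\le (e/\Phi)^{\Phi cs}$, whose exponent $\Phi cs\log_2(e/\Phi)\approx 0.48\,cs$ at $\Phi=1/10$ stays below the gain $(1-\Phi)c\,s\log_2(n/s)\ge 0.9\,cs$ even at $s=n/2$; combined with the concavity of $f(s)=s(\beta\ln(n/s)-\ln C)$, this reduces the uniform-in-$s$ claim to the two endpoint checks $s=1$ and $s=n/2$, which you verify correctly ($\beta>2$ for the first, $C<2^\beta$ for the second, the latter holding for large $c$ since $\Phi\log_2(e/\Phi)<1-\Phi$ on $(0,1/10]$). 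What your approach buys is a single unified estimate with no entropy machinery; what the paper's buys is nothing extra here, though its entropy bound is the more standard tool for linear-size sets. Two minor remarks: your closing claim that this balance is ``precisely what forces'' $\Phi\le 1/10$ overstates things slightly --- your own inequality $\Phi\log_2(e/\Phi)<1-\Phi$ has slack at $\Phi=1/10$ and holds up to roughly $\Phi\approx 0.22$, so the hypothesis is sufficient but not tight for your argument --- and, like the paper, you implicitly treat $(1-\Phi)cs$ as an integer; neither affects correctness.
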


\begin{proof}
    Let $p^* = \binom{n}{s} \binom{cs}{(1-\Phi) c s} ((1+\varepsilon)s/n)^{(1-\Phi) c s}$. We give two upper bounds: the first for $s = o(n)$ and the second for $s = \kappa n$ for some constant $0 < \kappa \leq 1/2$.
    
    To get the first bound, we use the inequality $\binom{y}{x} \leq (ey/x)^x$, that holds for any integers $y \geq x \geq 1$. Then, $p^* \leq (en/s)^s (e/(1-\Phi))^{(1-\Phi) c s} ((1+\varepsilon) s/n)^{(1-\Phi) c s} = 2^{s(\beta + (1-(1-\Phi)c) (\log n - \log s))}$, where $\beta = \log(e) + (1-\Phi)c \log (e (1+\varepsilon) / (1-\Phi)) $. For large enough $n$, the $(1-(1-\Phi) c) \log n$ factor dominates in the exponent. Thus, 
    it suffices to choose $c$ large enough and it holds that $p^* \leq 1/n^2$ for large enough $n$.
    
    To get the second bound, we need to use a tighter inequality (since $s$ and $n$ are large) to bound the binomial coefficients. More concretely, using Stirling's formula, it holds that $\binom{y}{x} \leq 2^{y H(x/y)}$, for any integers $x, y \geq 0$ and where $H(q) = -q \log(q) - (1-q) \log (1-q)$ is the binary entropy of $q \in (0,1)$. As a result, we get $p^* \leq 2^{n ( H(\kappa) + c \kappa H(1-\Phi) + (1-\Phi) c \kappa \log ((1+\varepsilon) \kappa))}$. 
    First, 
    we take small enough $\varepsilon$ such that for any $\kappa \leq 1/2$, $|\log ((1+\varepsilon) \kappa)| \geq 0.9$. 
    Next, note that 
    $H(1-\Phi) \leq 2 \sqrt{\Phi (1-\Phi)} \leq 0.6$, where the first inequality is a well-known upper bound for binary entropy that holds for any $\Phi \in (0,1)$ and the second one holds because $2 \sqrt{x (1-x)}$ takes value $0.6$ at $x=1/10$ and increases between $x=0$ and $x=1/2$. Then, $H(1-\Phi) \leq |(3/4) (1-\Phi) \log ((1+\varepsilon) \kappa)|$, since the right-hand side is strictly greater than $0.6$ for $\Phi \leq 1/10$.
    Therefore, $p^* \leq 2^{n ( H(\kappa) + (c \kappa/4) (1-\Phi) \log ((1+\varepsilon) \kappa))}$. Next, we take $c$ large enough so that $|(c \kappa/8) (1-\Phi) \log ((1+\varepsilon) \kappa)| \geq 2 |\kappa \log \kappa|$. Then, $p^* \leq 2^{n ( \kappa \log \kappa - (1-\kappa) \log(1-\kappa) + (c \kappa/8) (1-\Phi) \log ((1+\varepsilon) \kappa))}$. Since $\kappa \log \kappa - (1-\kappa) \log(1-\kappa) \leq 0$ for $0 \leq \kappa \leq 1/2$ and the other term in the exponent is negative,  $p^* \leq 2^{n (c \kappa/8) (1-\Phi) \log ((1+\varepsilon) \kappa)}$. Finally, we use again $\log ((1+\varepsilon) \kappa) \leq -0.9$ to obtain $p^* \leq 2^{- b \cdot s }$, where $b = 0.9 c (1-\Phi) / 8$ does not depend on $\kappa$. Since $s = \Omega(n)$, it holds that $p^* \leq 1/n^2$ for large enough $n$. 
\end{proof}

\begin{theorem}
\label{thm:expanderDegreeReductionGuarantees}
For any constant $\Phi \in (0,1/10]$ and for any $O(\log n)$-regular expander graph $G = (V,E)$ with constant conductance, Procedure \ExpanderDegreeReduction{} 
uses $O(\log^3 n)$ rounds and $O(n \log^3 n)$ messages to output an $O(1)$-bounded degree expander graph with conductance $\Phi$ with high probability. 
\end{theorem}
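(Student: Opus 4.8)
The plan is to establish three things about the output graph $G_\delta$ of Procedure \ExpanderDegreeReduction{}: that every node has degree $O(1)$, that the procedure halts within the claimed round and message budget, and that the generated graph has conductance at least $\Phi$ with high probability. The degree bound is the easiest, so I would dispatch it first. Each node $v$ generates exactly $c$ tokens, so it contributes at most $c$ edges as a \emph{source}. As a \emph{host}, $v$ only ever retains inactive tokens while it holds at most $\delta$ tokens, and once it holds more than $\delta$ no further tokens settle there; hence $v$ hosts at most $\delta$ inactive tokens, arising from at most $\delta$ distinct sources. Thus $\deg_{G_\delta}(v) \le c + \delta = O(1)$, as desired.

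For the complexity I would invoke Lemma \ref{lem:satisfiedNodes}, which guarantees that after $O(\log n)$ phases every node is satisfied, so that all tokens are inactive and the final edges can be added. It then remains to cost a single phase. A phase consists of random walks of length $\ell = O(\log n)$ (chosen above the mixing time of the expander $G$), and by the footnoted simulation each walk step of the $O(n)$ tokens costs an $O(\log n)$ overhead in the \textsf{P2P-GOSSIP} model, because a node may hold up to $O(\log n)$ tokens at once yet forward only one per round. Multiplying, a phase uses $O(\log^2 n)$ rounds and $O(n \log^2 n)$ messages, and over $O(\log n)$ phases this yields the claimed $O(\log^3 n)$ rounds and $O(n \log^3 n)$ messages (the temporary source-notification edges add only lower-order terms).

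The heart of the proof is the conductance guarantee, and here the key modeling step is to treat the final resting node of each token as an almost-uniform sample over $V$, with distinct tokens essentially independent, justified because each token's defining walk has length $\ell$ exceeding the mixing time, giving $\Pr[\V_i = v] \le (1+\varepsilon)/n$ exactly as in Lemma \ref{lem:nonAcceptedTokensDecrease}. Fix a cut $S \subseteq V$ with $|S| = s \le n/2$. The $cs$ tokens generated inside $S$ each land inside $S$ with probability at most $(1+\varepsilon)s/n$, so the ``bad'' event that fewer than $\Phi c s$ of them escape---equivalently, that at least $(1-\Phi)cs$ remain inside---occurs with probability at most $\binom{cs}{(1-\Phi)cs}\,((1+\varepsilon)s/n)^{(1-\Phi)cs}$ by a union bound over which tokens stay. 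Multiplying by the $\binom{n}{s}$ choices of $S$ and applying Lemma \ref{lem:probaInequality} bounds the probability that \emph{some} size-$s$ cut is bad by $1/n^2$, provided $c$ is chosen large enough relative to $\Phi$. Summing over all $s \in \{1,\dots,n/2\}$ gives a total failure probability of at most $1/n$, so with high probability every cut has at least $\Phi c s$ tokens escaping its smaller side; since each node has degree $\Theta(1)$ we have $vol(S) = \Theta(s)$, and an appropriate choice of the constants $c$ and $\delta$ turns this escape bound into a conductance of at least $\Phi$.

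The main obstacle I anticipate is this last union bound, and in particular the regime $s = \Theta(n)$: there the crude estimate $\binom{y}{x}\le(ey/x)^x$ is too weak, and one must instead use the sharp Stirling/entropy bound $\binom{y}{x}\le 2^{yH(x/y)}$ together with a careful choice of $c$ and $\varepsilon$ so that the negative exponential term dominates---precisely the content of Lemma \ref{lem:probaInequality}, which I would therefore treat as the technical crux. A secondary subtlety worth flagging is the independence-and-uniformity assumption on resting locations: because a token settles only upon reaching a node below its $\delta$-capacity, its endpoint distribution is in principle coupled to the other tokens through the capacity constraints, and the argument relies on Lemma \ref{lem:goodNodeInvariant} (at least $0.9n$ nodes are always below capacity) to ensure this coupling does not spoil the near-uniform escape probabilities.
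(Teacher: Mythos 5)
Your proposal is correct and follows essentially the same route as the paper: the same three-part decomposition (degree bound, complexity via Lemma~\ref{lem:satisfiedNodes} plus the $O(\log n)$ per-step congestion overhead, and conductance via a union bound over cuts of each size $s$ combined with Lemma~\ref{lem:probaInequality}). The one point where the paper is more explicit is the token-dependence issue you flag at the end: rather than appealing to Lemma~\ref{lem:goodNodeInvariant}, the paper resolves it with a chain-rule argument showing that conditioning on earlier tokens settling in $S$ can only decrease the probability that a later token settles in $S$ (since nodes accept at most $\delta$ inactive tokens), which is precisely what licenses the product bound $((1+\varepsilon)s/n)^{(1-\Phi)cs}$ before the $\binom{cs}{(1-\Phi)cs}$ union bound.
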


\begin{proof}
    We start with the round complexity. By the algorithm definition, we run $O(\log n)$ phases, and within each phase, tokens take at most $\ell = O(\log n)$ steps. Moreover, a simple randomized analysis shows that each node receives, for any $0 \leq i \leq \ell$, in expectation $O(c) = O(1)$ tokens having taken $i$ steps and per incident edge. Thus, by Chernoff bounds, each node receives, with high probability, at most $O(\log n)$ tokens having taken $i$ steps, for any $0 \leq i \leq \ell$. (A more detailed analysis can be found in the proof of Lemma 3.2 in~\cite{jacm13}.)  Thus, under the condition that tokens having taken less steps have priority to be sent, all tokens take $\ell$ steps within $O(\log^2 n)$ rounds with high probability. As for the message complexity, each node can send at most 1 messages per round, thus the message complexity follows from the round complexity.  
    Next, the resulting graph trivially has bounded degree since by the algorithm definition, each node has up to $c + \delta \leq 2\delta$ incident edges. 

    It remains to show that the resulting graph has constant conductance with high probability. To do so, we give a similar proof to that of Lemma 1 in \cite{focs2015}. To start with, all nodes are satisfied when the algorithm ends, by Lemma \ref{lem:satisfiedNodes}. We consider an arbitrary $S \subset V$ of size $s \leq n/2$ and the following random variables: the edges $\mathcal{E}_1,\ldots,\mathcal{E}_{cs}$ obtained via the algorithm, each corresponding to a now inactive token originating in $S$. After which, let the random variable $\mathcal{L}(S)$ denote the number of edges with both endpoints in $S$. We shall upper bound $\Pr[\mathcal{L}(S) \geq (1-\Phi) c s]$.
    For any integer $i \in \{1,\ldots, c s\}$, let $\Ind_i$ be the indicator random variable indicating whether $\mathcal{E}_i$ has both endpoints in $S$. Since each token is obtained from a random walk of length at least $\ell$, where $\ell$ is greater than the mixing time of $G$, then for any integer $i \in \{1,\ldots, c s\}$, it holds by union bound that $\Pr[\Ind_i = 1] \leq (1+\varepsilon) s/n$. Conditioning on the events $\Ind_1 = 1,\ldots,\Ind_{i-1} = 1$ can only reduce that probability, since nodes (in $S$) can hold a maximum of $\delta$ inactive tokens. Thus, $\Pr[\Ind_i = 1 \mid \bigwedge_{j=1}^{i-1} \Ind_j = 1] \leq (1+\varepsilon) s/n$.
    By the chain rule of conditional probability, we have 
    $\Pr[\bigwedge_{j=1}^{i} \Ind_j = 1] = \Pr[\Ind_i = 1 \mid \bigwedge_{j=1}^{i-1} \Ind_j = 1] \cdot \Pr[\bigwedge_{j=1}^{i-1} \Ind_j = 1] = \Pr[\Ind_1 = 1] \cdot \prod_{j=2}^{cs}\Pr[\Ind_j = 1 \mid \bigwedge_{k=1}^{j-1} \Ind_k = 1] \leq ((1+\varepsilon) s/n)^i$. Note that in an analogous coin-flipping experiment with $c s$ coins, the probability that you get at least $(1-\Phi) c s$ coins is upper bounded by the probability that you get $(1-\Phi) c s$ heads and leave other coins unobserved. Thus, $\Pr[\mathcal{L}(S) \geq (1-\Phi) c s] \leq \binom{cs}{(1-\Phi) c s} ((1+\varepsilon) s/n)^{(1-\Phi) c s}$.

    From the above inequality, we get $p^* = \Pr[\exists S, |S| = s \; \texttt{and} \; \mathcal{L}(S) \geq (1-\Phi) c s] \leq \binom{n}{s} \binom{cs}{(1-\Phi) c s} ((1+\varepsilon)s/n)^{(1-\Phi) c s}$. By Lemma \ref{lem:probaInequality}, 
    $p^* \leq 1/n^2$ for large enough $n$ and some suitably chosen integer $c \geq 1$ and constant $\varepsilon \geq 0$ (where $\varepsilon$ can be made as small as required by taking $\ell = O(\log n)$ large enough). It suffices to union bound over all $n/2$ possible sizes for $S$ (for $s \in \{1,\ldots,n/2\}$) to get that the resulting graph has constant conductance $\Phi$ with probability at least $1 - 1/n$.  
\end{proof}

\subsection{Computing Graph Sketches via Gossip}
\label{subsec:push-sum}

Consider a node subset $S \subset V$ and a set of (peer-to-peer) edges $E' \subseteq V^2$, such that the graph $G'= (S,E')$ is an $O(1)$-bounded degree expander graph (and thus connected), whose maximum degree $d$ is known to all nodes (in $S$). We describe the \AggregateSketchVector{} primitive run on $G'$. We assume that all nodes in $S$ know the minimum ID among the nodes in $S$, that all nodes in $S$ have computed a common graph sketch matrix $M_G$, with certain properties (see Lemma~\ref{lem:aggregating-sketch-vectors-works}), and 
that each node $u \in S$ has computed the corresponding sketch vector $\sketchVect(u)$. Primitive \AggregateSketchVector{} computes 
the aggregate sketch vector $\sketchVect(S) = \sum_{u \in S} \sketchVect(u)$. 

\medskip

\noindent \textbf{Description.}
Each node $u \in S$ creates two sketch vectors, a positive sketch vector and a negative sketch vector, denoted by $\sketchVect^+(u)$ and $\sketchVect^-(u)$, respectively. $\sketchVect^+(u)$ holds all positive value entries of $\sketchVect(u)$ and the remaining entries are zero. $\sketchVect^-(u)$ is similarly defined. Define $\sketchVect^+(S) = \sum_{u \in S} \sketchVect^+(u)$ and $\sketchVect^-(S) = \sum_{u \in S} \sketchVect^-(u)$. All nodes $u$ run two instances of \PushSum{} (see~\cite{KDG03}) for $T_s = O(\log n \log (nx) )$ phases, where each phase consists of $T_p = O(\log^2 n)$ rounds and $x$ comes from the graph sketching matrix $M_G$ (see Lemma~\ref{lem:aggregating-sketch-vectors-works}), to obtain $\sketchVect^+(S)$ and $\sketchVect^-(S)$. Note that \PushSum{}, as described in~\cite{KDG03}, was described for a  complete graph and is a gossip-style technique to compute aggregate functions (like average, sum, etc.) in a network. The description below shows how to simulate it on an $O(1)$-bounded degree expander to compute the sum. 

We describe an instance of \PushSum{} from the perspective of node $u$ to obtain $\sketchVect^+(S)$. The process is similar to obtain $\sketchVect^-(S)$. At the end of each phase $t$, each node $u$ maintains some weight value $w_{t,u}^+$ and some estimate of the average of the sketches $s_{t,u}^+$. After a sufficiently long time $t^* = T_s T_p + 1$ has passed, the ratio $s^+_{t^*,u}/w^+_{t^*,u}$ will be an approximation of  $\sketchVect^+(S)$. 

Initially, the minimum ID node $\min$ sets its weight $w_{0,\min}^+ = 1$ and the remaining nodes $u$ set their weights $w_{0,u}^+ = 0$. Initially, every node $u$ sets $s_{0,u}^+ = \sketchVect^+(S)$. We assume that at the end of the phase $0$ (i.e., before the algorithm begins), each node $u$ sends the pair $(w^+_{0,u}, s^+_{0,u})$ to itself. 

In each phase $t \geq 1$ of \PushSum{}, each node $u$ does the following. Let $\lbrace (\hat{s_r}, \hat{w_r}) \rbrace$ be all pairs sent to node $u$ at the end of phase $t-1$. Node $u$ computes $s^+_{t,u} = \sum_r \hat{s_r}$ and $w^+_{t,u} = \sum_r \hat{w_r}$.\footnote{Recall that $s^+_{t,u}$ is some vector. The sum of vectors denotes the sum of the elements for each index of the vectors.} 
Node $u$ constructs the pair $( \frac{1}{2}s^+_{t,u},\frac{1}{2}w^+_{t,u} )$ and sends it to itself and a node $v \in S$ chosen uniformly at random as follows. 
Node $u$ initiates a lazy random walk of length $O(\log n)$ 
within $G'$ carrying the message $( \frac{1}{2}s^+_{t,u},\frac{1}{2}w^+_{t,u} )$. During these $T_p$ rounds, node $u$ helps other messages continue their random walks. 
After $T_s$ phases are over, i.e, at time $t^* = T_s T_p + 1$ rounds, an approximation of $\sketchVect^+(S)$ is constructed at node $u$ as $s^+_{t^*,u}/w^+_{t^*,u}$. 
The correct values of $\sketchVect^+(S)$ can be recovered from $s^+_{t^*,u}/w^+_{t^*,u}$ by rounding each element in $s^+_{t^*,u}/w^+_{t^*,u}$ to the nearest legal value, i.e., the value that an element in the sketch vector can take.

Now, each node has computed 
$\sketchVect^+(S)$ and $\sketchVect^-(S)$, and computes the output 
$\sketchVect(S)$  as $\sketchVect(S) = \sketchVect^+(S) + \sketchVect^-(S)$. 

\noindent \textbf{Analysis.}
In order to capture the properties of \AggregateSketchVector{}, we must first look at the properties of \PushSum{}. The following lemma from~\cite{KDG03} captures the properties of \PushSum{}. To bridge the notation, notice that each node $u$ contains some initial sketch vector $x_u = \sketchVect^+(u)$ (and for another instance of \PushSum{}  $x_u = \sketchVect^-(u)$) and our goal is to calculate $\sum_{j \in S} x_j$.

\begin{lemma}[Theorem~3.1 in~\cite{KDG03}]\label{lem:push-sum}
 \begin{enumerate}
     \item With probability at least $1 - \delta$, there is a time $t_0 = O(\log n + \log\frac{1}{\epsilon} + \log\frac{1}{\delta})$, such that for all times $t \geq t_0$ and all nodes $u$, the relative error in the  estimate of the average at node $u$ is at most $\epsilon \cdot \frac{\sum_j |x_j|}{|\sum_j x_j|}$ (where the relative error is $\frac{1}{|\sum_j x_j|} \cdot |\frac{s_{t,u}}{w_{t,u}} - \frac{1}{n} \cdot \sum_j x_j|$). In particular, the relative error is at most $\epsilon$ whenever all values $x_j$ have the same sign.
     \item The sizes of all messages sent at time $t$ are bounded by $O(t+\max_j \text{bits}(x_j))$ bits, where $\text{bits}(x_j)$ denotes the number of bits in the binary representation of $x_j$.
 \end{enumerate}
\end{lemma}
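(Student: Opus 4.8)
The plan is to treat this as the convergence guarantee for the \PushSum{} protocol on the uniform-recipient (complete-graph) dynamics of~\cite{KDG03} and to reconstruct the linear-algebraic argument behind it. First I would encode one round as multiplication by a random column-stochastic matrix $M_t$: each node keeps half of its $(s,w)$-mass and forwards the other half to a uniformly random node, so that $\vec{s}_t = M_t \vec{s}_{t-1}$ and $\vec{w}_t = M_t \vec{w}_{t-1}$, where column $j$ of $M_t$ carries $1/2$ on the diagonal and $1/2$ on a uniformly random row. Column-stochasticity yields the crucial mass-conservation invariants $\sum_i w_{t,i} = \sum_i w_{0,i} = n$ and $\sum_i s_{t,i} = \sum_j x_j$ for every $t$, which are what make the ratios $s_{t,u}/w_{t,u}$ meaningful estimates of the average.

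Next I would introduce the error vector $\vec{e}_t = \vec{s}_t - \bar{s}\,\vec{w}_t$, where $\bar{s} = \frac1n \sum_j x_j$ is the true average. By linearity $\vec{e}_t = M_t \vec{e}_{t-1}$, and mass conservation forces $\sum_i e_{t,i} = 0$, so the error always lives in the zero-sum subspace. The heart of the argument is the potential $\Psi_t = \|\vec{e}_t\|_2^2$: I would compute $\E[M_{t+1}^\top M_{t+1}]$ and show that, restricted to the zero-sum subspace, it contracts $\Psi$ by a fixed factor, giving $\E[\Psi_{t+1}\mid \vec{e}_t] \le c\,\Psi_t$ for some constant $c < 1$. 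Iterating gives $\E[\Psi_t] \le c^t \Psi_0$ with $\Psi_0 = \sum_i (x_i - \bar{s})^2 = O\!\left((\sum_j |x_j|)^2\right)$, and Markov's inequality then makes $\Psi_t$ as small as $\epsilon^2\,(\sum_j|x_j|)^2/\poly(n)$ with probability at least $1-\delta$ once $t \ge t_0 = O(\log n + \log\frac1\epsilon + \log\frac1\delta)$, where the three additive terms absorb, respectively, the $\poly(n)$ margin needed below, the accuracy $\epsilon$, and the Markov failure probability $\delta$.

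To turn a small $\Psi_t$ into the claimed per-node relative error I would write the relative error at node $u$ as $\frac{|e_{t,u}|}{|\sum_j x_j|\, w_{t,u}} \le \frac{\sqrt{\Psi_t}}{|\sum_j x_j|\, w_{t,u}}$, so what remains is a lower bound $w_{t,u} \ge 1/\poly(n)$ holding at all nodes simultaneously with high probability. Establishing this minimum-weight bound is the step I expect to be the main obstacle, since individual weights can transiently become very small even though their total is conserved; \cite{KDG03} control it through a separate diffusion/concentration argument on the weight vector, and I would follow that route, folding the resulting $\poly(n)$ factor into the additive $\log n$ term of $t_0$ and taking a union bound over the potential and minimum-weight failure events. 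Combining the two bounds gives relative error at most $\epsilon \cdot \frac{\sum_j |x_j|}{|\sum_j x_j|}$ at every node, which collapses to $\epsilon$ exactly when all $x_j$ share a sign, since then $\sum_j |x_j| = |\sum_j x_j|$; this proves part~(1).

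Finally, part~(2) is routine bookkeeping on numerators and denominators. Every transmitted value is obtained by summing received pairs and halving, so after $t$ rounds each coordinate is a dyadic rational whose denominator divides $2^t$, contributing $O(t)$ bits to the fractional part, while the integer part of any partial sum is bounded in magnitude by $\sum_j |x_j| \le n\,\max_j |x_j|$, contributing $O(\log n + \max_j \text{bits}(x_j))$ bits. Hence every message has size $O(t + \max_j \text{bits}(x_j))$ bits, as claimed.
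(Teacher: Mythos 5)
The paper does not actually prove this statement: it is imported verbatim as Theorem~3.1 of \cite{KDG03} and used as a black box, so there is no internal proof to compare against --- your proposal is a reconstruction of the \emph{external} argument. That reconstruction follows the standard route and its core steps check out: writing a round as multiplication by the random column-stochastic matrix $M_t = \tfrac12(I+P_t)$, the mass-conservation identities, the zero-sum error vector $\vec{e}_t = \vec{s}_t - \bar{s}\,\vec{w}_t$, and the contraction $\E{\|M_{t+1}\vec{e}_t\|_2^2 \mid \vec{e}_t} \le \tfrac12\|\vec{e}_t\|_2^2$ on the zero-sum subspace are all correct (a short computation using $\Pr[\sigma(j)=\sigma(k)]=1/n$ for $j\neq k$ gives exactly the factor $\tfrac14(2-\tfrac1n)\le\tfrac12$). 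Two issues remain. First, the minimum-weight lower bound $w_{t,u}\ge 1/\poly(n)$, which you flag and defer to \cite{KDG03}, is the genuinely hard part of that proof rather than routine bookkeeping: the only deterministic guarantee is $w_{t,u}\ge 2^{-t}$ (a node that is never pushed to halves its weight every round), which decays strictly faster than $\sqrt{\E{\Psi_t}}\approx 2^{-t/2}\sqrt{\Psi_0}$, so the naive race between the potential and the minimum weight fails and a separate probabilistic mass-spreading argument is unavoidable; your proof is incomplete precisely at the step that makes the theorem nontrivial. Second, your Markov step certifies the error bound at a single time $t$, whereas the lemma asserts it for \emph{all} $t\ge t_0$; since $\Psi_t$ is not pointwise non-increasing (the $M_t$ are column- but not doubly stochastic, and one can exhibit rounds where $\|\vec{e}_t\|_2$ grows), you need a union bound over all $t\ge t_0$ exploiting the geometric decay of $\E{\Psi_t}$ --- asymptotically free, but it must be said. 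Part~(2) and the observation that the bound collapses to $\epsilon$ when all $x_j$ share a sign are fine.
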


As mentioned in~\cite{KDG03}, we can get the sum of the values of all nodes instead of just the average, by setting the initial weights such that only one node has weight $1$ and the remaining have weight $0$, as we do in \AggregateSketchVector{}.

It should be noted that the original \PushSum{} was designed for a complete graph, where each node $u$ could sample one of the nodes of the graph uniformly at random. We simulate this process on an $O(1)$-bounded degree expander by having node $u$ choose another node as the result of a lazy random walk that is run for the mixing time. This guarantees us that we may sample a node in $S$ nearly uniformly at random.

Since this requires multiple nodes initiating and facilitating random walks simultaneously, we make use of the following lemma, adapted from a lemma in~\cite{jacm13}. We note that it is for the traditional CONGEST model, where every node can communicate with all of its neighbors in a given round, unlike the current setting. We explain in the analysis of the final lemma of this section, how the following lemma can easily be adapted to the current model.

\begin{lemma}[Adapted from Lemma 3.2 in~\cite{jacm13}]\label{lem:random-walk-time}
In the traditional CONGEST model, let $G=(V,E)$ be an undirected graph and let each node $v\in V$, with degree $d(v)$, initiate  $\eta d(v)$ random walks, each
of length $\lambda$. Then all walks reach their destinations in $O(\eta \lambda \log n)$ rounds with high probability.
\end{lemma}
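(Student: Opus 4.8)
The plan is to decouple the two sources of difficulty: the random trajectories of the walks, which determine how many walks traverse each edge at each step (a purely probabilistic quantity), and the scheduling of tokens across edges under the one-message-per-edge-per-round constraint of CONGEST. I would first establish a congestion bound on the trajectories via independence and Chernoff, and then show that a simple step-synchronized schedule clears this congestion within the claimed time.

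First I would observe that the initial placement of the $\sum_v \eta d(v) = 2\eta m$ walks (where $m = |E|$) is proportional to degree, i.e.\ it matches the stationary distribution $\pi_v = d(v)/(2m)$ of the simple random walk on $G$. Writing $p^{(i)}_{s,v}$ for the $i$-step transition probability, stationarity ($\sum_s \pi_s\, p^{(i)}_{s,v} = \pi_v$) gives that the expected number of walks at any node $v$ after any number $i$ of steps is exactly $2\eta m \cdot \pi_v = \eta d(v)$, and that the expected number of walks crossing a fixed directed edge $(u,v)$ during step $i$ is $\sum_s \eta d(s)\, p^{(i-1)}_{s,u} \cdot (1/d(u)) = \eta$. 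The point of phrasing the second quantity as a sum over starting vertices $s$ is that it is a sum over \emph{individual walks}.

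Second --- and this is the key point --- since the $2\eta m$ walks make mutually independent random choices, for a fixed node $v$ and step $i$ (resp.\ a fixed directed edge $(u,v)$ and step $i$) the number of walks at $v$ after step $i$ (resp.\ crossing $(u,v)$ during step $i$) is a sum of independent $0/1$ indicators, one per walk, whose expectation was computed above. Standard Chernoff bounds (in the form $\Pr[Y \ge \mu + t] \le \exp(-t^2/(2\mu + 2t/3))$) then show that each such count is $O(\eta d(v) + \log n)$ per node and $O(\eta + \log n)$ per directed edge, each with probability at least $1 - n^{-c}$ for a constant $c$ of our choosing. A union bound over the at most $2m \le n^2$ directed edges and the $\lambda$ steps shows that, with high probability, at most $O(\eta + \log n)$ walks cross any edge during any single step. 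Crucially, this is a property of the trajectories alone and is therefore independent of how tokens are later scheduled. Finally, I would schedule the tokens in $\lambda$ synchronized phases, where phase $i$ advances every token from its step-$(i-1)$ position to its step-$i$ position (this is exactly what the ``fewer-steps-first'' priority rule enforces). By the congestion bound at most $O(\eta + \log n)$ tokens must traverse any directed edge during a phase, and since CONGEST delivers one token per directed edge per round, phase $i$ completes in $O(\eta + \log n)$ rounds; summing over the $\lambda$ phases gives $O(\lambda(\eta + \log n))$ rounds total, which is $O(\eta \lambda \log n)$ for $\eta \ge 1$ (the regime of interest), as claimed.

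I expect the main obstacle to be the concentration step, and specifically making the independence argument airtight: one must resist conditioning on the (correlated) occupancy vector $X^{(i)}$ and instead write the number crossing each edge directly as a sum of per-walk independent indicators with mean exactly $\eta$, so that Chernoff applies cleanly. A secondary care point is ensuring the per-event failure probability is small enough (say $n^{-3}$) to survive the union bound over all $O(m\lambda)$ (edge, step) pairs, and absorbing the additive $\log n$ term --- needed precisely because the per-edge mean $\eta$ may be small so that a purely multiplicative bound is too weak --- into the final $O(\eta\lambda\log n)$ estimate.
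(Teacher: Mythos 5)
The paper itself gives no proof of this lemma --- it is imported as stated from Lemma 3.2 of the cited random-walks paper --- so the only benchmark is the argument in that source, which is essentially what you have reconstructed. Your proof is correct: the degree-proportional initial load is stationary, so the expected congestion on each directed edge at each step is exactly $\eta$; writing that count as a sum of per-walk independent indicators (rather than conditioning on the correlated occupancy vector) makes the Chernoff/Bernstein bound of $O(\eta+\log n)$ per edge per step legitimate; and the step-synchronized schedule then yields $O(\lambda(\eta+\log n)) \subseteq O(\eta\lambda\log n)$ rounds, with the only implicit (and here harmless) assumption being that $\lambda$ is polynomially bounded so the union bound over $(\text{edge},\text{step})$ pairs survives.
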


The following lemma shows that \AggregateSketchVector{} works as desired to help us reconstruct the aggregate of the initial sketch vectors in the desired time.

\begin{lemma}\label{lem:aggregating-sketch-vectors-works}
Assume that each node $u$ has a graph sketch vector $\sketchVect(u)$ computed using a graph sketching matrix $M_G$ that satisfies 
the following properties:
\begin{itemize}
\item For all nodes $u$, elements in $\sketchVect(u)$ belong to the same range of values $[L, U]$, where $L,U \in 	\mathbb{R}$, and $U-L = x \neq 0$.
\item The range of values taken by elements in the sketch vector, $Range$, is some totally ordered countable set of numbers such that the minimum distance between any two numbers is at least some constant $c > 0$, i.e., $\forall u, v \in Range$ if $u \neq v$, then $|u - v| \geq c$.
\end{itemize}
If all nodes $u\in S$ participate in \AggregateSketchVector{} for $O(\log^3 n \log (nx))$ rounds, then each node outputs the aggregated sketch vector $\sketchVect(S) = \sum_{u \in S} \sketchVect(u)$ with high probability.
\end{lemma}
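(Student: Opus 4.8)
The plan is to reduce the correctness of \AggregateSketchVector{} to the convergence guarantee of \PushSum{} (Lemma~\ref{lem:push-sum}) and then show that the error in the final estimates can be driven below the rounding threshold, so that each node recovers $\sketchVect(S)$ exactly. The reason for splitting each sketch vector into its positive and negative parts $\sketchVect^+(u)$ and $\sketchVect^-(u)$ is precisely so that each of the two \PushSum{} instances aggregates values all of the same sign; by the last sentence of part~1 of Lemma~\ref{lem:push-sum}, this guarantees that the relative error of every entry is at most $\epsilon$, rather than the weaker bound $\epsilon \cdot \sum_j|x_j|/|\sum_j x_j|$. Since the initial weights are set so that only the minimum-ID node has weight $1$, the preserved invariants $\sum_u w_{t,u} = 1$ and $\sum_u s_{t,u} = \sum_u x_u$ force the ratio $s^+_{t,u}/w^+_{t,u}$ to converge to the sum $\sketchVect^+(S)$ (and likewise for the negative part), so that $\sketchVect^+(S)+\sketchVect^-(S) = \sketchVect(S)$ is what we recover.

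Next I would fix the accuracy parameter. Each entry of $\sketchVect^+(S)$ is a sum of at most $n$ entries, each lying in $[L,U]$ with $U-L=x$, so every coordinate of the target has magnitude $O(nx)$; by part~1 of Lemma~\ref{lem:push-sum} the absolute error of the estimate of such a coordinate is at most $\epsilon \cdot O(nx)$. Because the attainable aggregate values lie on a grid of spacing at least $c$ (in the concrete construction all sketch entries are integers, so one may take the grid to be $\mathbb{Z}$ and $c=1$), it suffices to make the absolute error strictly less than $c/2$ and then round each coordinate to the nearest legal value. Taking $\epsilon = \Theta(c/(nx)) = \Theta(1/(nx))$ gives $\log(1/\epsilon) = O(\log(nx))$, and choosing the failure probability $\delta$ polynomially small---then union bounding over the $O(\polylog n)$ coordinates of the sketch, the two instances, and all nodes---contributes only $\log(1/\delta) = O(\log n)$. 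Plugging into $t_0 = O(\log n + \log(1/\epsilon) + \log(1/\delta))$ shows that $O(\log(nx))$ \PushSum{} steps already suffice; the (generous) choice of $T_s = O(\log n\log(nx))$ phases therefore dominates $t_0$ and makes the rounding recovery exact with high probability. The message-size bound follows from part~2 of Lemma~\ref{lem:push-sum}: messages carry $O(t + \log x)$ bits, which is $O(\polylog n)$ over the $t^\ast = O(\log^3 n\log(nx))$ rounds whenever $x = \poly(n)$, as is the case for polynomially bounded sketch matrices.

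It then remains to implement a single \PushSum{} step---which in the idealized model of Lemma~\ref{lem:push-sum} samples a uniformly random node of the complete graph---on the $O(1)$-bounded degree expander $G'$. I would do this by a lazy random walk of length $\ell = O(\log n)$, i.e.\ the mixing time of $G'$, so that the endpoint is within $1/\poly(n)$ of uniform in total-variation distance. To bound the cost of running all nodes' walks simultaneously under the one-message-per-round constraint of the \textsf{P2P-GOSSIP} model, I would invoke Lemma~\ref{lem:random-walk-time}: since $G'$ has constant degree and each node launches only $O(1)$ walks, the number of walk-tokens crossing any edge at any step is $O(1)$ in expectation and $O(\log n)$ with high probability by a Chernoff bound, so the $O(\log n)$ slowdown of simulating a CONGEST round by gossip is absorbed and each length-$O(\log n)$ walk completes within $T_p = O(\log^2 n)$ rounds with high probability. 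Multiplying by the $T_s = O(\log n\log(nx))$ phases yields the claimed $T_s T_p = O(\log^3 n\log(nx))$ round complexity, and since each node sends a single message per round the message complexity matches.

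The main obstacle is this last step: Lemma~\ref{lem:push-sum} is stated for exact uniform sampling on a complete graph, whereas the random-walk simulation only delivers a near-uniform sample with a $1/\poly(n)$ total-variation bias at each of the $t^\ast$ steps. The careful part is arguing that this per-step bias does not accumulate enough to invalidate the convergence and relative-error guarantee---for which taking the walk length a sufficiently large multiple of the mixing time (so the bias is polynomially smaller than $\epsilon$) and absorbing the resulting deviation into the generous $T_s = O(\log n\log(nx))$ phase budget is exactly what buys the extra $\log n$ factor over the naive $t_0$. A secondary subtlety is confirming that the congestion analysis of Lemma~\ref{lem:random-walk-time} survives the restriction to a single message per round; this is handled by the priority rule (tokens that have taken fewer steps move first) together with the high-probability $O(\log n)$ congestion bound.
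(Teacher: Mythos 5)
Your proposal is correct and follows essentially the same route as the paper's proof: reduce to the \PushSum{} guarantee of Lemma~\ref{lem:push-sum} via the positive/negative split (so the relative-error bound $\epsilon$ applies), choose $\epsilon = \Theta(1/(nx))$ and a polynomially small $\delta$ so that rounding to the nearest legal value recovers the exact aggregate, and simulate the uniform-sampling step by lazy random walks of mixing-time length on the constant-degree expander, using Lemma~\ref{lem:random-walk-time} to get $T_p = O(\log^2 n)$ rounds per phase. The only divergence is in how the extra $O(\log n)$ factor of phases beyond $t_0$ is motivated --- the paper uses it to ensure that with high probability a node other than the initiator is sampled, whereas you use it to absorb the $1/\poly(n)$ total-variation bias of the near-uniform walk endpoints --- and your treatment of that bias (and of the absolute-error bound $\epsilon\cdot O(nx) < c/2$) is, if anything, slightly more careful than the paper's.
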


\begin{proof}
We first argue that \PushSum{} is faithfully simulated. Each round of the original \PushSum{} corresponds to $O(\log n)$ phases of the process as described here. We first describe the end result of one phase. In one phase, we ensure that a lazy random walk starting at some node $u$ is run for enough rounds to reach mixing time. In an $O(1)$-bounded degree expander, the mixing time of one single lazy random walk is $O(\log n)$ rounds. However, since each node simultaneously initiates a single lazy random walk of length $O(\log n)$, by Lemma~\ref{lem:random-walk-time}, we see that we need $O(\log^2 n)$ rounds to ensure that they all complete. Furthermore, it should be noted that Lemma~\ref{lem:random-walk-time} applies to the traditional CONGEST model. However, in a bounded degree graph with maximum degree $d$, one round of the traditional CONGEST model can be simulated in $d$ rounds in the current model. Since $d = O(1)$, we incur no overhead with respect to the lemma and see that the $O(1)$ random walks initiated from each node, each run for $O(\log n)$ time, require in total $O(\log^2 n)$ rounds to complete. Thus in one phase, some node from $G'$ will be sampled (nearly) uniformly at random.

To faithfully simulate \PushSum{}, we want the sampled node to not be the starting node. However, in order to ensure that some node, other than the starting node, is sampled uniformly at random, we need to run $O(\log n)$ phases of this protocol. By a simple Chernoff bound, we can see that with high probability, even when $|S| = O(1)$, a random walk starting at some node $u$ will sample some node that is not $u$.

Notice that by separating each node's sketch vector into two vectors corresponding to positive values and negative values and using \PushSum{} to find the aggregate of each set of vectors separately, we ensure that the signs of values being aggregated is the same. From Lemma~\ref{lem:push-sum}, we see that the relative error of any entry is thus at most $\epsilon$. By setting $\epsilon = 1/nx$, we see that the output of each element of the sketch vector is within $1/n$ of the actual value. Since each element of the sketch vector can only take values such that the distance between values is some constant $c$, it is easy to see that rounding the element to the nearest legal value gives the correct value (since the outputted value will only be at most some $1/n \ll c/2$ from a legal value).

Finally, in order to ensure that these guarantees are with high probability, it is sufficient to set $\delta = 1/n$ in Lemma~\ref{lem:push-sum}. Substituting the values of $\epsilon$ and $\delta$ in Lemma~\ref{lem:push-sum}, we get the corresponding run time.
\end{proof}

In our main algorithm in Section~\ref{sec:overlayConstruction}, \AggregateSketchVector{} is used to compute the aggregate of the sketch vectors that are obtained via the graph sketching matrix guaranteed by Lemma \ref{lem:graph-sketch-works} in Section~\ref{subsec:graph-sketch}. 
That matrix satisfies the requirements of Lemma~\ref{lem:aggregating-sketch-vectors-works} and $x = O(n^r)$, for some positive constant $r \geq 1$, and $c=1$. Thus, we have the following corollary.

\begin{corollary}\label{cor:aggregate-sketch-vector-works}
Consider any node subset $S \subset V$ and set of (possibly peer-to-peer) edges $E' \subseteq V^2$, such that the graph $G'= (S,E')$ is an $O(1)$-bounded degree expander graph. 
Assume that each node $u$ has a graph sketch vector $\sketchVect(u)$ computed using the graph sketching matrix guaranteed by Lemma \ref{lem:graph-sketch-works} in Section~\ref{subsec:graph-sketch}. 
If all nodes $u\in S$ participate in \AggregateSketchVector{} for $O(\log^4 n)$ rounds, then each node will obtain the aggregated sketch vector $\sketchVect(S) = \sum_{u \in S} \sketchVect(u)$ with high probability.
\end{corollary}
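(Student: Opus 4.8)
The plan is to obtain this corollary as a direct instantiation of Lemma~\ref{lem:aggregating-sketch-vectors-works}, applied to the specific graph sketching matrix $M_G$ furnished by Lemma~\ref{lem:graph-sketch-works}. Thus the entire argument reduces to two tasks: (i) verifying that this particular $M_G$ satisfies the two structural hypotheses required by Lemma~\ref{lem:aggregating-sketch-vectors-works} (the common range hypothesis and the minimum-gap hypothesis), and (ii) reading off the resulting parameter values $x$ and $c$ and substituting them into the round bound $O(\log^3 n \log(nx))$ guaranteed by that lemma. The connectivity and bounded-degree assumptions on $G' = (S, E')$ are exactly the standing hypotheses of Lemma~\ref{lem:aggregating-sketch-vectors-works}, so no extra work is needed on that front.

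First I would establish the minimum-gap hypothesis. Recall from Section~\ref{subsec:graph-sketch} that $M_G$ has polynomially bounded \emph{integer} entries and that, by definition, every incidence vector $\incVect(u)$ has all entries in $\{-1, 0, 1\}$. Since $\sketchVect(u) = M_G \cdot \incVect(u)$ is a matrix--vector product of integer quantities, each entry of $\sketchVect(u)$ is an integer. Hence the set $Range$ of attainable entry values is a subset of $\mathbb{Z}$, which is totally ordered and countable, and any two distinct integers differ by at least $1$; this gives $c = 1$. Next I would establish the common range hypothesis by bounding the magnitude of a sketch entry. Each entry of $\sketchVect(u)$ is the inner product of a row of $M_G$ with $\incVect(u)$; since $\incVect(u)$ has at most $d(u) \leq n$ nonzero entries, each of absolute value $1$, and each entry of $M_G$ is bounded by some $\poly(N) = \poly(n)$, every entry of $\sketchVect(u)$ has absolute value at most $n \cdot \poly(n) = O(n^r)$ for a constant $r \geq 1$. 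Consequently all entries lie in a common range $[L, U]$ with $U - L = x = O(n^r)$, as required.

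Finally I would substitute these values into Lemma~\ref{lem:aggregating-sketch-vectors-works}. With $x = O(n^r)$ we have $\log(nx) = O(\log n)$, so the round bound $O(\log^3 n \log(nx))$ collapses to $O(\log^4 n)$, and the lemma's conclusion yields that every node in $S$ outputs $\sketchVect(S) = \sum_{u \in S} \sketchVect(u)$ with high probability. The only step warranting care, and the one I would double-check, is the magnitude bound: one must confirm that the polynomial bound on the entries of $M_G$ combines with the at most $n$ nonzero entries of $\incVect(u)$ so that $x$ remains polynomial in $n$, which is precisely what turns the generic $\log(nx)$ factor into a single extra $\log n$ factor rather than a $\poly(n)$ blow-up. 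This is the whole gap between the general Lemma~\ref{lem:aggregating-sketch-vectors-works} and the concrete $O(\log^4 n)$ bound stated here.
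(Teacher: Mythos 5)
Your proposal is correct and follows exactly the paper's route: the paper derives the corollary by noting that the sketching matrix of Lemma~\ref{lem:graph-sketch-works} satisfies the hypotheses of Lemma~\ref{lem:aggregating-sketch-vectors-works} with $c=1$ and $x=O(n^r)$, so that $\log(nx)=O(\log n)$ and the round bound becomes $O(\log^4 n)$. You merely spell out the justification of $c=1$ and $x=O(n^r)$ (integrality and polynomial boundedness of the entries) that the paper asserts without detail, which is a faithful and complete filling-in of the same argument.
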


\section{Overlay Construction Protocol}\label{sec:overlayConstruction}

Let $G = (V,E)$ be the original graph and let $\Phi \in (0,1/10]$ be the desired conductance. We show how to transform any arbitrary graph $G = (V,E)$  (even with large degree) into an expander overlay network with conductance $\geq \Phi$ and bounded degree. We assume the harder case of $\Phi = \Omega(1)$, or in other words, of building an overlay network with conductance $\Omega(1)$. 

Initially, each node forms its own (high conductance) cluster, and the set of intra-cluster edges $E_0$ is empty. In each stage $i \in [1,\st]$, we compute a new set of edges $E_i \subseteq V^2$ 
that merge the stage's starting clusters into fewer and larger-sized (high conductance) clusters.
In fact, the merging reduces the number of clusters by half with constant probability.
After $\st = O(\log n)$ stages, the resulting edge set $E_{\st}$ generates an expander graph $G(E_{\st}) = (V,E_{\st})$ with high conductance (i.e., at least $\Phi$).

\subsection{Algorithm Description}

The algorithm runs in $\st = O(\log n)$ stages, each consisting of three steps. We ensure the following invariant holds at the start of each stage $i \in [1,\st]$: the graph generated by $E_{i-1}$, $G(E_{i-1}) = (V,E_{i-1})$, is decomposed into clusters (i.e., connected components) with constant conductance $\Phi$ and constant maximum degree. (Note that the initial edge set $E_0$ is empty and thus the invariant holds trivially.) 
\medskip

\noindent \textbf{First Step.}
Nodes spread, within each cluster of $G(E_{i-1})$, the minimum ID and an associated $O(\log^2 n)$ random bit string, by executing a PUSH style information spreading algorithm (e.g., see~\cite{FPRU90} or \cite{Giakkoupis2011}) over (each cluster of) $G(E_{i-1})$ for $T_g = O( \Phi^{-1}\log n )$ rounds. More concretely, nodes do the following. Initially, each node picks an $O(\log^2 n)$ bit random string. Then, in each round, each node chooses one random neighbor in $G(E_{i-1})$ and sends a message containing $\langle$minimum ID seen so far, associated random string$\rangle$ to it. Since $G(E_{i-1})$ decomposes into clusters of maximum degree $O(1)$ and of conductance $\geq \Phi$, and thus of diameter $O(\Phi^{-1} \log n)$, the result of~\cite{FPRU90} implies that $T_g$ rounds is sufficient to spread, within each cluster, that cluster's minimum ID and its associated $O(\log^2 n)$ bit random string.

\medskip

\noindent \textbf{Second Step.}
Now, all nodes within some cluster $V_j$ of $G(E_{i-1})$ know the minimum ID of that cluster and the associated $O(\log^2 n)$ bit random string. Then, nodes use this shared randomness to sample one edge per cluster, using graph sketches. We describe how this is done in the next paragraph. Each such sampled edge is an inter-cluster edge (i.e., its endpoints are in different clusters) with constant probability. These sampled inter-cluster edges --- the set of which is denoted by $E_i^c$ --- allow to merge clusters of $G(E_{i-1})$, reducing them by a constant fraction with constant probability. However, although each cluster samples a single edge, each node may have $\omega(1)$ incident edges in $E_i^c$: for example, if many clusters sample edges incident to one particular node. Thus, we finish the step by computing a second edge set $E_i^b$ such that $G(E_i^b)$ has bounded degree and preserves the connectivity of $G(E_i^c)$. 

Let $v$ be any node within some cluster $V_j$ of $G(E_{i-1})$. First, each node $v$ computes its sketch vector $\sketchVect(v)$ (using the cluster's shared random string to generate a graph sketching matrix, see Subsection \ref{subsec:graph-sketch}) and runs Procedure \AggregateSketchVector{}. Its output is the aggregated sketch vector $\sigma(v) = \sum_{v \in V_j} \sketchVect(v)$. Next, $v$ samples an edge using $\sigma(v)$ (see Subsection \ref{subsec:graph-sketch}), and this edge is an inter-cluster edge with some constant probability $\delta$. If $v$ is an endpoint of that edge, then it contacts the other endpoint node, they exchange information on their cluster's minimum ID and $v$ drops the edge if these IDs are identical. In other words, all sampled edges whose endpoints come from different clusters (i.e., exchanged different IDs) are added to the edge set $E_i^c$, whereas others are simply ``dropped''.

Now, it remains to compute $E_i^b$, by applying a simple degree reduction procedure on $E_i^c$. Note that for each edge in $E_i^c$, one node belongs to the cluster that sampled that edge. That node is said to own the edge and one can, for the sake of the procedure's description, orient the edge from the owner node to the other endpoint.
Then, nodes do the following in two rounds. Initially, nodes add all incident edges in $E_i^c$ to $E_i^b$. Any node $u$ with at least 3 incoming incident edges in $E_i^b$ --- the set of their owners is denoted by $N_u$ --- locally computes an arbitrary undirected cycle over $N_u \cup \{u\}$, and locally replaces these incident edges in $E_i^b$ with its two incident edges in that cycle. Then, each node that owns an edge contacts the other endpoint, and they exchange the result of their local computations (if any). Finally, nodes that received a cycle locally replace their owned edge in $E_i^b$ with their two incident edges in that cycle.  

\medskip

\noindent \textbf{Third Step.}
Finally, we transform each cluster of $G(E_{i-1} \cup E_i^b)$ into an $O(1)$-bounded degree expander graph with constant conductance $\Phi$. More concretely, we compute a set of edges $E_i$ for which: (i) each cluster of $G(E_i)$ is a cluster of $G(E_{i-1} \cup E_i^b)$, and (ii) each cluster of $G(E_i)$ is an $O(1)$-bounded degree expander graph with constant conductance $\Phi$.

To do so, we first run Procedure \CreateExpander{} (described in Subsection \ref{subsec:createExpander}) for $O(\log^2 n)$ rounds. This computes, for each cluster of $G(E_{i-1} \cup E_i^b)$, an $O(\log n)$-regular expander graph with constant conductance $\Phi$. Next, we run Procedure \ExpanderDegreeReduction{} (described in Subsection \ref{subsec:degreeReductionExpander}) for $O(\log^3 n)$ rounds to reduce the degree of these expander graphs to some constant. More precisely, this procedure computes, for each $O(\log n)$-regular expander graph, an $O(1)$-bounded degree expander graph that also has constant conductance $\Phi$.
Adding together the edges of all these $O(1)$-bounded degree expander graphs gives the edge set $E_i$.

\subsection{Analysis}
\label{sec:analysis}

To prove the correctness of our overlay construction algorithm and bound its round and message complexities (see Theorem \ref{thm:mainTheorem}), we first give a series of lemmas that rely upon the following invariant: for any stage $i \in [1,\st]$, all clusters of $G(E_{i-1})$ are $O(1)$-bounded degree expander graphs with constant conductance $\Phi$. We start by showing that each edge sampled using graph sketching is an inter-cluster edge with constant probability.

\begin{lemma}
\label{lem:interClusterEdgeSampling}
For any stage $i \in [1,\st]$, assume all clusters of $G(E_{i-1})$ are $O(1)$-bounded degree expander graphs with constant conductance $\Phi$. Then, each cluster of $G(E_{i-1})$ samples an inter-cluster edge with probability at least $3/4$.
\end{lemma}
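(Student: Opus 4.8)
The plan is to observe that the content of this lemma reduces almost entirely to the correctness guarantees already established for graph sketching and sketch aggregation, together with one structural observation about cuts. The crucial point is that each node $v$ in a cluster $V_j$ samples its edge from the aggregate sketch $\sigma(v) = \sketchVect(V_j) = \sum_{u \in V_j}\sketchVect(u)$, and by the construction of the incidence vectors this sketch encodes the cut $E_G(V_j, V\setminus V_j)$ of the cluster \emph{with respect to the original communication graph $G$}, not with respect to the current overlay $G(E_{i-1})$. Since $V_j$ is a connected component (cluster) of $G(E_{i-1})$, the clusters partition $V$; hence every edge of $E_G(V_j, V\setminus V_j)$ has its second endpoint in some cluster other than $V_j$ and is therefore automatically an inter-cluster edge. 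Thus, conditioned on the sampling succeeding, the sampled edge is \emph{always} inter-cluster, and the only loss comes from the sketch's failure probability.

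Carrying this out, I would first invoke the invariant: all clusters of $G(E_{i-1})$ are $O(1)$-bounded degree expanders, which is exactly the hypothesis needed to apply Corollary~\ref{cor:aggregate-sketch-vector-works}. This guarantees that, with high probability, every node $v \in V_j$ obtains the correct aggregate sketch $\sigma(v) = \sketchVect(V_j)$. Next I would note that since $G$ is connected and $V_j \subsetneq V$ (there is more than one cluster, else there is nothing to merge and the lemma is vacuous), the cut $E_G(V_j, V\setminus V_j)$ is non-empty, so an edge can indeed be sampled. Then, conditioned on the aggregate sketch being correct, Lemma~\ref{lem:graph-sketch-works} --- with its failure parameter $\delta$ chosen to be a small constant, say $1/8$ --- says that applying $f_G$ to $\sketchVect(V_j)$ returns a uniformly random edge of $E_G(V_j, V\setminus V_j)$ with probability $1-\delta$; by the structural observation above, any such edge is inter-cluster. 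Combining the high-probability correctness of aggregation with the $1-\delta$ sampling success (a union bound over the two failure events) then yields probability at least $3/4$ for large enough $n$.

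The proof is short precisely because the genuinely technical work has been front-loaded into the sketching and aggregation primitives; the main thing to get right here is the bookkeeping of the two failure sources --- the negligible $1/n^c$ failure of \AggregateSketchVector{} and the constant failure $\delta$ of the sketch sampling --- and the choice of $\delta$ so that their union leaves at least $3/4$. The one conceptual step that must be stated carefully, and which I expect is the only place this argument could go wrong if glossed over, is that the cut is taken with respect to $G$ rather than $G(E_{i-1})$: this is what makes ``sampled from the cut'' and ``inter-cluster'' essentially synonymous, and it also quietly relies on $G$ being connected so that a proper cluster always possesses an outgoing $G$-edge to sample.
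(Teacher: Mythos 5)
Your proposal is correct and follows essentially the same route as the paper's proof: invoke the invariant to apply Corollary~\ref{cor:aggregate-sketch-vector-works} for w.h.p.\ correct aggregation, then apply Lemma~\ref{lem:graph-sketch-works} with a constant failure parameter $\delta$ to get the $3/4$ bound (the paper simply takes $\delta = 1/4$, while you take $\delta = 1/8$ plus a union bound with the aggregation failure, which is if anything slightly more careful). Your explicit observation that a successfully sampled edge of $E_G(V_j, V\setminus V_j)$ is automatically inter-cluster because clusters partition $V$ is left implicit in the paper; the only ingredient the paper includes that you elide is the first-step information spreading (via~\cite{Giakkoupis2011}) that establishes the common minimum ID and shared random string, which is the precondition for all nodes of $V_j$ to compute sketches with the same matrix $M_G$.
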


\begin{proof}
Since we use the information spreading algorithm of \cite{Giakkoupis2011} (see Theorem 1 in \cite{Giakkoupis2011}), and all clusters of $G(E_{i-1})$ are $O(1)$-bounded degree expander graphs with constant conductance $\Phi$, then for each such cluster, the minimum ID and the associated $O(\log^2 n)$ bits random string is spread to all of that cluster's nodes in $O(\Phi^{-1} \log n)$ rounds w.h.p.

Next, consider any cluster $V_j$ of $G(E_{i-1})$. Recall that each node $v \in V_j$ computes its sketch vector $\sketchVect(v)$ initially and uses it as input for Procedure \AggregateSketchVector{}. Its output is $\sigma(v) = \sum_{u \in V_j} \sketchVect(u)$, the aggregate of sketch vector within cluster $V_j$, with high probability by
Corollary~\ref{cor:aggregate-sketch-vector-works}.
Then, by choosing $\delta = 1/4$, each node can sample an inter-cluster edge from this aggregate vector with constant probability $1 - \delta = 3/4$, by Lemma~\ref{lem:graph-sketch-works}. 
\end{proof}

As a result of the above lemma, many sampled edges are inter-cluster with constant probability. Thus, the addition of these sampled edges significantly reduces the number of clusters --- in fact, by a constant fraction --- with constant probability.

\begin{lemma}
\label{lem:secondStepMerge}
For any stage $i \in [1,\st]$, assume all $c \geq 1$ clusters of $G(E_{i-1})$ are $O(1)$-bounded degree expander graphs with constant conductance $\Phi$. If $G(E_{i-1})$ has $c > 1$ clusters, then $G(E_{i-1} \cup E_i^c)$ has at most $3c/4$ clusters with probability at least 1/2. 
\end{lemma}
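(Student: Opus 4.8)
The plan is to combine a Markov bound on the number of clusters that successfully sample an inter-cluster edge with a deterministic component-counting argument showing that many successful samples force the number of clusters to drop by a constant fraction.

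First I would let $X$ denote the number of clusters of $G(E_{i-1})$ that sample an inter-cluster edge (and hence contribute an edge to $E_i^c$). By Lemma~\ref{lem:interClusterEdgeSampling}, each of the $c$ clusters samples an inter-cluster edge with probability at least $3/4$, so by linearity of expectation (no independence among clusters is needed) $\E[X] \geq 3c/4$, equivalently $\E[c - X] \leq c/4$, where $c - X$ is the number of ``failed'' clusters. Since $c - X$ is a non-negative random variable, Markov's inequality gives $\Pr[c - X \geq c/2] \leq (c/4)/(c/2) = 1/2$, so with probability at least $1/2$ we have $X > c/2$.

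Next I would carry out a deterministic argument showing that adding the edges of $E_i^c$ leaves at most $c - X/2$ clusters. Form the auxiliary graph $F$ on the $c$ clusters, placing an edge between two clusters whenever $E_i^c$ contains an edge joining them; the clusters of $G(E_{i-1} \cup E_i^c)$ are exactly the connected components of $F$. Each successful cluster is an endpoint of the edge it sampled and is therefore non-isolated in $F$, so the number $s$ of isolated vertices of $F$ satisfies $s \leq c - X$. Writing $p$ for the number of non-singleton components, these components partition the $c - s$ non-isolated vertices into parts of size at least $2$, whence $p \leq (c - s)/2$. The total number of components is then $s + p \leq (c + s)/2 \leq c - X/2$.

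Finally I would combine the two bounds: on the event $X > c/2$ (which holds with probability at least $1/2$), the number of clusters of $G(E_{i-1} \cup E_i^c)$ is at most $c - X/2 < c - c/4 = 3c/4$, as required. The main point to get right is the component-counting step: it must account for the facts that two distinct clusters may sample the same physical edge, and that a failed cluster may still be non-isolated in $F$ (by being the target of another cluster's sampled edge). Both are handled because the bound relies only on the inequalities $s \leq c - X$ and ``each non-singleton component has size at least $2$,'' neither of which is affected by edge multiplicities or by which clusters own which edges. The probabilistic half is routine once Lemma~\ref{lem:interClusterEdgeSampling} is invoked, so nothing stronger than Markov's inequality is needed.
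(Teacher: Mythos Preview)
Your proposal is correct and follows essentially the same approach as the paper: invoke Lemma~\ref{lem:interClusterEdgeSampling}, apply linearity of expectation and Markov's inequality to conclude that at least $c/2$ clusters sample inter-cluster edges with probability at least $1/2$, and then deduce the cluster count drops to at most $c - X/2 \le 3c/4$. Your auxiliary-graph argument for the deterministic bound $c - X/2$ is in fact more carefully stated than the paper's one-line justification (``unless an inter-cluster edge from $V_{j'}$ to $V_j$ was also sampled''), but the content is identical.
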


\begin{proof}
     All clusters of $G(E_{i-1})$ are $O(1)$-bounded degree expander graphs with constant conductance $\Phi$ (from the lemma's assumption). Then, by Lemma \ref{lem:interClusterEdgeSampling}, each cluster of $G(E_{i-1})$ samples an inter-cluster edge with probability $3/4$. By linearity of expectation, in expectation $3c/4$ of the sampled edges are inter-cluster edges, or equivalently, in expectation $c/4$ sampled edges are intra-cluster edges. Applying Markov's inequality, the probability that more than $c/2$ sampled edges are intra-cluster is at most 1/2, or equivalently, the probability that at least $c/2$ sampled edges are inter-cluster is at least 1/2. Any of these inter-cluster edges (say, from $V_j$ to $V_{j'}$) allows to merge two clusters and reduce the number of clusters by 1, unless an inter-cluster edge from $V_{j'}$ to $V_{j}$ was also sampled. Hence, if there are at least $a$ inter-cluster edges, then $G(E_{i-1} \cup E_i^c)$ has at most $c - a/2$ clusters. Thus, $G(E_{i-1} \cup E_i^c)$ has at most $3c/4$ clusters with probability at least 1/2.
\end{proof}

Note that the sampled inter-cluster edges may generate a graph with large degree. Next, we prove that the degree reduction procedure used in the second step is correct.

\begin{lemma}
\label{lem:secondStepSparsify}
For any stage $i \in [1,\st]$, $G(E_i^b)$ has maximum degree at most 4 and preserves the connectivity of $G(E_i^c)$, that is, for any edge $(u,v) \in E_i^c$, $u$ and $v$ are connected in $G(E_i^b)$.
\end{lemma}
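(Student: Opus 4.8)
The plan is to track the two roles that each node plays in the procedure and bound its contribution to $E_i^b$ from each role separately. The key structural observation I would establish first is that each node owns \emph{at most one} edge of $E_i^c$: since every cluster of $G(E_{i-1})$ samples exactly one edge, and each node lies in exactly one cluster, a node $v$ can own only the unique edge sampled by its own cluster (and only when $v$ is that edge's endpoint inside the cluster). Orienting each edge of $E_i^c$ from owner to target, this says every node has out-degree at most $1$, while its in-degree (the number of edges targeting it) may be arbitrarily large. Thus the only source of large degree is the in-degree, which is exactly what the cycle-replacement step is designed to tame.

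For the degree bound, I would argue that every edge of $E_i^b$ incident to a node $v$ is attributable to one of two roles of $v$, each contributing at most $2$. In its \emph{target role}, if $v$ has fewer than $3$ incoming edges then these (at most $2$) edges survive unchanged, while if $v$ has at least $3$ incoming edges then $v$ replaces all of them by the two edges of its cycle over $N_v \cup \{v\}$ incident to $v$; either way the target role contributes at most $2$. In its \emph{owner role}, $v$ handles its unique owned edge $(v,w)$ (if any): if $w$ has fewer than $3$ incoming edges the edge survives (contributing $1$), whereas if $w$ has at least $3$ incoming edges then $v \in N_w$ and $v$ replaces $(v,w)$ by its two incident edges in $w$'s cycle (contributing $2$). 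Here I would observe that the only way a node appears in a cycle over $N_x \cup \{x\}$ other than as the center $x$ is by belonging to $N_x$, i.e.\ by owning the edge to $x$; hence these two roles genuinely partition the edges incident to $v$, and the degree is at most $2 + 2 = 4$.

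For connectivity, I would fix an arbitrary edge $(u,v) \in E_i^c$ owned by $u$ and split on the in-degree of its target $v$. If $v$ has fewer than $3$ incoming edges, no cycle is built at $v$, so $u$ receives no cycle when it contacts $v$ and the original edge $(u,v)$ survives in $E_i^b$, giving a direct connection. If $v$ has at least $3$ incoming edges, then $u \in N_v$, and I would show the entire cycle over $N_v \cup \{v\}$ appears in $E_i^b$: the center $v$ adds its two incident cycle edges, and every owner $w \in N_v$ --- in particular $u$ --- contacts $v$, receives the cycle, and adds its two incident cycle edges, so that all edges of the cycle are present. Since $u$ and $v$ both lie on this cycle, which is connected, they are connected in $G(E_i^b)$.

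I expect the main obstacle to be the careful bookkeeping of the owner-side replacement rather than any deep idea: one must verify that when the target builds a cycle, \emph{all} of its owners (not merely the target) add their cycle edges, so that a connected cycle --- and not a disconnected fragment --- is actually realized, and simultaneously that no node is charged more than twice in either role. I would make these two checks explicit, since they are precisely the points where an off-by-one in the procedure's description could break either the degree bound or the connectivity guarantee.
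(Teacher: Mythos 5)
Your proposal is correct and follows essentially the same route as the paper's proof: the same decomposition of a node's degree into at most $2$ edges from its incoming (target) role and at most $2$ from its single owned (outgoing) edge, and the same connectivity argument that a dropped edge of $E_i^c$ is replaced by a cycle containing both endpoints. Your explicit verification that every owner in $N_v$ adds its own incident cycle edges (so the full cycle is realized in $E_i^b$) is a welcome elaboration of a step the paper dismisses as straightforward, but it is not a different approach.
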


\begin{proof}
Consider an arbitrary node $u$. As in the algorithm description, for each edge of $E_i^c$, assume for the sake of the proof that it is directed from the sampling cluster outwards. (Note that $E_i^c$ contains no edges with both endpoints in the same cluster.) We show that $u$ is incident to at most 2 edges in $E_i^b$ due to incoming edges in $E_i^c$, and to at most 2 other edges in $E_i^b$ due to outgoing edges in $E_i^c$; thus, $u$ has a degree of at most 4 in $G(E_i^b)$. On the one hand, if $u$ has at least 3 incoming edges in $E_i^c$, then these edges are replaced by only 2 edges (from the cycle built by $u$ in the second half of the second step) in $E_i^b$. On the other hand, each cluster, and thus node, is incident to at most one outgoing edge in $E_i^c$ (by the algorithm definition). This outgoing edge may be replaced by at most two edges in $E_i^b$ (when the outgoing edge's other endpoint has more than 3 incoming edges in $E_i^c$). 

Finally, it is straightforward to show that $G(E_i^b)$ preserves the connectivity of $G(E_i^c)$. Indeed, any edge $(u,v) \in E_i^c$ that does not remain in $E_i^b$ is replaced by a cycle in $E_i^b$ (locally computed by either $u$ or $v$) that includes both $u$ and $v$. 
\end{proof}

Next, we show that the invariant is maintained, and that the stage reduces the number of clusters by a constant fraction with constant probability.

\begin{lemma}
\label{lem:clusterDecrease}
For any stage $i \in [1,\st]$, assume all $c \geq 1$ clusters of $G(E_{i-1})$ are $O(1)$-bounded degree expander graphs with constant conductance $\Phi$. Then, all $c' \leq c$ clusters of $G(E_{i})$ are $O(1)$-bounded degree expander graphs with constant conductance $\Phi$. Moreover, if $G(E_{i-1})$ has $c > 1$ clusters, then $G(E_{i})$ has at most $3c/4$ clusters with probability 1/2. 
\end{lemma}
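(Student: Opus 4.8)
The plan is to prove the two assertions separately, both resting on a single bridging observation: the clusters (connected components) of the generated graph are the same across the intermediate graphs $G(E_{i-1} \cup E_i^c)$, $G(E_{i-1} \cup E_i^b)$, and $G(E_i)$. Once this is established, the cluster count can be tracked cleanly through the three steps of the stage, and the structural invariant (bounded degree, conductance $\Phi$) can be attributed entirely to the rebuilding performed in the third step. Note also that the stage only ever \emph{adds} edges to $E_{i-1}$, so no cluster is ever split, which immediately gives $c' \leq c$ unconditionally.

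For the progress claim I would chain three facts. By Lemma \ref{lem:secondStepMerge}, since the clusters of $G(E_{i-1})$ are $O(1)$-bounded degree expanders of conductance $\Phi$, the graph $G(E_{i-1} \cup E_i^c)$ has at most $3c/4$ clusters with probability at least $1/2$ (when $c > 1$). By Lemma \ref{lem:secondStepSparsify}, $G(E_i^b)$ preserves the connectivity of $G(E_i^c)$ --- every edge of $E_i^c$ has its endpoints connected in $G(E_i^b)$ --- so the connected components of $G(E_{i-1} \cup E_i^b)$ coincide exactly with those of $G(E_{i-1} \cup E_i^c)$, leaving the cluster count unchanged. Finally, by property (i) of the third step, each cluster of $G(E_i)$ is a cluster of $G(E_{i-1} \cup E_i^b)$, so the cluster counts of $G(E_i)$ and $G(E_{i-1} \cup E_i^b)$ are equal. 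Combining these yields $c' \leq 3c/4$ with probability at least $1/2$.

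For the invariant, the work happens in the third step. First I would bound the maximum degree of $G(E_{i-1} \cup E_i^b)$: the assumption gives $E_{i-1}$ maximum degree $O(1)$, and Lemma \ref{lem:secondStepSparsify} gives $E_i^b$ maximum degree at most $4$, so each cluster of $G(E_{i-1} \cup E_i^b)$ is a $d$-bounded degree graph with $d = O(1)$. This is exactly the input required by Procedure \CreateExpander{}, so Lemma \ref{lem:GotteResult} produces, for each such cluster, an $O(\log n)$-regular expander with conductance $\Phi$. Applying Procedure \ExpanderDegreeReduction{} to each of these (Theorem \ref{thm:expanderDegreeReductionGuarantees}) then yields an $O(1)$-bounded degree expander with conductance $\Phi$, which by property (ii) of the third step is precisely a cluster of $G(E_i)$. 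Hence every cluster of $G(E_i)$ is an $O(1)$-bounded degree expander with conductance $\Phi$, re-establishing the invariant for stage $i+1$.

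The main obstacle is bookkeeping the probabilities. Several ingredients succeed only with high probability: the aggregate sketch computation within each cluster (Corollary \ref{cor:aggregate-sketch-vector-works}, which itself relies on the $O(1)$-bounded degree expander invariant holding for $G(E_{i-1})$), the expander construction (Lemma \ref{lem:GotteResult}), and the degree reduction (Theorem \ref{thm:expanderDegreeReductionGuarantees}). Since these are run in parallel over at most $n$ clusters, I would take a union bound over all clusters so that, with high probability, every cluster is simultaneously rebuilt into an $O(1)$-bounded degree expander with conductance $\Phi$. This keeps the structural invariant intact while leaving the constant success probability $1/2$ of the merging step (Lemma \ref{lem:secondStepMerge}) as the only non-high-probability component, exactly as claimed in the statement.
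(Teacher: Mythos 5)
Your proposal is correct and follows essentially the same route as the paper's proof: bound the degree of $G(E_{i-1} \cup E_i^b)$ via the assumption and Lemma \ref{lem:secondStepSparsify}, track the cluster count through $G(E_{i-1} \cup E_i^c)$ and $G(E_{i-1} \cup E_i^b)$ using Lemmas \ref{lem:secondStepMerge} and \ref{lem:secondStepSparsify}, and then invoke Lemma \ref{lem:GotteResult} and Theorem \ref{thm:expanderDegreeReductionGuarantees} to show the third step rebuilds every cluster into an $O(1)$-bounded degree expander of conductance $\Phi$ without disconnecting anything. One small wording caution: it is not literally true that ``the stage only adds edges to $E_{i-1}$'' (the third step replaces the edge set within each cluster), but your actual argument correctly relies on property (i) of the third step rather than on edge containment, so the conclusion $c' \leq c$ stands.
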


\begin{proof}
    Consider some stage $i \in [1,\st]$. Let the $c \geq 1$ clusters of $G(E_{i-1})$ be denoted by $V_1,\ldots,V_c$. We first provide some properties about $G(E_{i-1} \cup E_i^b)$. To start with, $G(E_{i-1})$ has constant maximum degree (from the lemma's assumption) and thus by Lemma \ref{lem:secondStepSparsify}, $G(E_{i-1} \cup E_i^b)$ also has constant maximum degree, denoted by $d$. 
    Second, each cluster $V_i$ of $G(E_{i-1})$ is part of (i.e., a subset of) some cluster in $G(E_{i-1} \cup E_i^b)$, as the latter graph only has additional edges. Third, if $G(E_{i-1})$ has $c > 1$ clusters, then $G(E_{i-1} \cup E_i^b)$ has at most $c/2$ clusters with constant probability. Indeed, $G(E_{i-1} \cup E_i^c)$ has at most $3c/4$ clusters with probability at least 1/2. Since $G(E_{i-1} \cup E_i^b)$ preserves the connectivity of $G(E_{i-1} \cup E_i^c)$ by Lemma \ref{lem:secondStepSparsify}, $G(E_{i-1} \cup E_i^b)$ also has at most $3c/4$ clusters with probability at least 1/2.

    Finally, recall that the edge set $E_i$ is obtained by nodes running Procedure \CreateExpander{} (with parameters $\Phi$ and $d$) followed by Procedure \ExpanderDegreeReduction{} on each cluster of $G(E_{i-1} \cup E_i^b)$. First, note that Procedures \CreateExpander{} and \ExpanderDegreeReduction{} do not disconnect any clusters of $G(E_{i-1} \cup E_i^b)$, and thus in particular $G(E_{i})$ has at most as many clusters as $G(E_{i-1} \cup E_i^b)$. Moreover, by Lemma \ref{lem:GotteResult} and Theorem \ref{thm:expanderDegreeReductionGuarantees}, each cluster of $G(E_{i})$ is an $O(1)$-bounded degree expander graph with constant conductance $\Phi$. This completes the proof.   
\end{proof}

Note that by definition, any graph has at least one cluster. The above lemma implies that within $O(\log n)$ stages, we obtain a graph with exactly one high-conductance cluster, and thus solve the overlay construction problem --- see the following theorem.

\begin{theorem}
\label{thm:mainTheorem}
The overlay construction problem can be solved with high probability in $O(\log^5 n)$ rounds and $\tilde{O}(n)$ messages.
\end{theorem}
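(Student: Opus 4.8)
The plan is to derive Theorem~\ref{thm:mainTheorem} by combining the single-stage guarantees of Lemma~\ref{lem:clusterDecrease} with the per-stage costs of the three steps, then summing over the $\st = O(\log n)$ stages. I would split the argument into a correctness part (the algorithm terminates with a single $O(1)$-bounded degree expander cluster with high probability) and a complexity part (each stage costs $O(\log^4 n)$ rounds and $O(n\log^4 n)$ messages), and then multiply the complexity by the number of stages.

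For correctness, I would first fix the ``good'' event that the invariant of Lemma~\ref{lem:clusterDecrease} --- every cluster of $G(E_{i-1})$ is an $O(1)$-bounded degree expander with conductance $\Phi$ --- holds at the start of every stage. Since Lemma~\ref{lem:clusterDecrease} guarantees that, conditioned on the invariant at stage $i$, it is preserved at stage $i+1$ with high probability, a union bound over the $O(\log n)$ stages shows the invariant persists throughout with high probability. Conditioned on this, I would call a stage \emph{successful} if it reduces the cluster count by a factor $3/4$; Lemma~\ref{lem:clusterDecrease} gives that each stage with more than one cluster is successful with probability at least $1/2$, and this holds for any prior history because the edge-sampling randomness is fresh each stage. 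Hence the number of successful stages over $\st$ stages stochastically dominates a $\mathrm{Binomial}(\st, 1/2)$ variable, and a Chernoff bound shows that taking $\st = \Theta(\log n)$ with a large enough constant yields at least $\log_{4/3} n$ successful stages with high probability. Since $n \cdot (3/4)^{\log_{4/3} n} = 1$, that many reductions drive the count from the initial $n$ singleton clusters down to a single cluster, which by the invariant is the desired overlay.

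For the complexity, I would bound one stage by summing the three steps. The first step (PUSH information spreading) runs for $T_g = O(\Phi^{-1}\log n) = O(\log n)$ rounds since $\Phi = \Omega(1)$, costing $O(n\log n)$ messages. The second step is dominated by \AggregateSketchVector{}, which by Corollary~\ref{cor:aggregate-sketch-vector-works} takes $O(\log^4 n)$ rounds and hence $O(n\log^4 n)$ messages (one message per node per round), with the degree-reduction costing only $O(1)$ rounds. The third step runs \CreateExpander{} ($O(\log^2 n)$ rounds, $O(n\log^2 n)$ messages by Lemma~\ref{lem:GotteResult}) followed by \ExpanderDegreeReduction{} ($O(\log^3 n)$ rounds, $O(n\log^3 n)$ messages by Theorem~\ref{thm:expanderDegreeReductionGuarantees}). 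The sketch aggregation dominates, so a single stage takes $O(\log^4 n)$ rounds and $O(n\log^4 n)$ messages; multiplying by $\st = O(\log n)$ stages gives $O(\log^5 n)$ rounds and $O(n\log^5 n) = \tilde{O}(n)$ messages.

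The main obstacle I anticipate is the correctness argument rather than the complexity accounting. Because each stage reduces the number of clusters only with constant probability $1/2$ (not with high probability), one cannot union-bound a deterministic geometric decay; instead the argument hinges on establishing that, conditioned on arbitrary past history, the reduction still occurs with probability at least $1/2$, so that a clean Chernoff/stochastic-domination argument applies to the count of successful stages. A secondary care point is disentangling the two different confidence levels in play --- the high-probability preservation of the expander invariant versus the constant-probability cluster merging --- which I would handle by conditioning on the (high-probability) invariant event first and then running the concentration argument for the merging on top of it, taking a final union bound so that all per-stage, per-cluster high-probability events (including the round bounds of the randomized sub-procedures) hold simultaneously.
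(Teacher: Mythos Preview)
Your proposal is correct and follows essentially the same route as the paper's proof: define a stage as successful when the cluster count drops by a factor $3/4$, invoke Lemma~\ref{lem:clusterDecrease} to get success probability at least $1/2$ per stage, apply a Chernoff bound over $\st=O(\log n)$ stages to obtain $\log_{4/3} n$ successes w.h.p., and sum the $O(\log^4 n)$ per-stage cost (dominated by \AggregateSketchVector{}) over the stages. If anything, you are more explicit than the paper about separating the high-probability invariant maintenance from the constant-probability merging and about the stochastic-domination justification behind the Chernoff step, both of which the paper compresses into ``by Lemma~\ref{lem:clusterDecrease} (and a simple induction on $i$)''.
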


\begin{proof}
To start with, for any given stage $i \in [1,\st]$, the stage is said to be \emph{successful} if $G(E_{i})$ either (i) has a single cluster or (ii) has less than 3/4 as many clusters as $G(E_{i-1})$. By Lemma \ref{lem:clusterDecrease} (and a simple induction on $i$), each stage is successful with probability at least 1/2. Hence, for a large enough number $\st = O(\log n)$ of stages, a simple application of Chernoff bounds imply that there are at least $\log_{4/3} n$ successful stages. Thus, $G(E_{\st})$ has a single cluster. Moreover, by Lemma \ref{lem:clusterDecrease} again, $G(E_{\st})$ is an $O(1)$-bounded degree expander graph with constant conductance $\Phi$. The correctness follows.

The round complexity of $O(\log^5 n)$ is straightforward: $O(\log n)$ stages each take $O(\log^4 n)$ rounds. 
The message complexity follows directly from the time complexity, the fact that communication is always done on graphs of degree at most $O(\log n)$, and that messages of size $O(\polylog n)$ suffice throughout the algorithm (both to share the random string in the first step and  by Lemma \ref{lem:push-sum}, during Procedure \PushSum{} in the second step).
\end{proof}

\section{Experimental Results}\label{sec:experiment}

The proposed overlay construction protocol is implemented in a sequential simulation to study properties of the algorithm for a few different types of low-conductance graphs. We  study the number of rounds and the conductance of the resulting graphs. The simulation follows the algorithm's steps with some small deviations.

The algorithm is implemented in sequential simulated form using Python and the graph library graph\_tool \cite{peixoto_graph-tool_2014}, to study properties of the algorithm for a few low-conductance graphs with different properties, and study the number of rounds and estimate the conductance provided by the algorithm. The types of graphs tested by the simulator include a high-diameter cycle graph \textit{circle-10000}, a graph on a square grid \textit{grid-50-50}, a randomly-generated preferential attachment graph \textit{barabasi-2000-2-2}, as well as modestly sized real-world graphs with differing topologies: graphs modeling disease contagion \textit{kissler}, social network attachment \textit{twitch}, and citation networks \textit{wiki}. Table \ref{fig:exp-results} summarizes the results; full details on the implementation, graphs studied, and observations of performance are included in Section \ref{app:appendixExperimental} of the Appendix.

\begin{table}[h]
\small
    \begin{tabular}{ccccccccc}
        \toprule
        Graph & $n$ & Phases && $D_G$ & $D_{G_E}$ && $\Phi_{G}$ & $\Phi_{G_E}$ \\
        \cmidrule{1-3}\cmidrule{5-6}\cmidrule{8-9}
        \textit{circle-10000} & 10000 & 6 && 1111 & 6 && 0.068 & 0.453\\
        \textit{grid-50-50} & 2500 & 5 && 98 & 4 && 0.148 & 0.449 \\
        \textit{barabasi-2000-2-2} & 2000 & 3 && 5 & 4 && 0.4 & 0.451 \\
        \textit{wiki} & 2277 & 3& & 16 & 4 && 0.08 & 0.450 \\
        \textit{twitch} & 7126 & 3 && 10 & 5 && 0.143 & 0.452 \\
        \textit{kissler} & 409 & 3 && 9 & 3 && 0.2 & 0.446 \\

        \bottomrule
    \end{tabular}
    \caption{Table showing simulation results  on  various input graphs (denoted by $G$) and
    the corresponding graphs output by the protocol (denoted by $G_E$). $n$ is the number of nodes of $G$. Phases
    denotes the number of phases of the overlay construction protocol   that were required to produce $G_E$. $D_G$ and $D_{G_E}$ are lower-bound estimates of the graph diameter of $G$ and $G_E$. $\Phi_{G}$ and $\Phi_{G_E}$ are upper-bound estimates of conductance of $G$ and $G_E$. }
    \label{fig:exp-results}
\end{table}

The conductance of the input and the final overlay $G$ and $G_E$ are each estimated using $O(n)$ sampled graph cuts, to provide an upper-bound estimate on the actual graph conductances $\Phi_G$ and $\Phi_{G_E}$.
The table shows for each input graph these conductance estimates as well as the number of phases
of the protocol required for the algorithm to terminate. The table also gives the estimate of the diameters $D_G$ and $D_{G_E}$ of the initial and final graphs respectively, given by the pseudo-diameter as calculated for a sampling of nodes. The results show that the conductance of the constructed
graph is likely significantly higher compared to the starting graph and is close to $0.5$ which is essentially
the best possible value for a constant-degree random graph.
The results also show that the diameter of the final expander is roughly in line with expectations of an $O(\log{n})$ bound, and that the number of rounds, conductance, and diameter of $G_E$ are independent of the edge density of the initial graph.

\section{Conclusion}\label{sec:conclusion}
In this paper, we presented the first distributed  overlay construction protocol
that is fast (taking $O(\log^5 n)$ rounds) as well as taking significantly less communication (using
$\tilde{O}(n)$ messages, regardless of the number of edges of the initial graph). The protocol
assumes the P2P-GOSSIP model which uses gossip-based communication (which is very lightweight)
and the reconfigurable nature of P2P networks.
Both bounds are essentially the best possible. Our result also implies that the distributed complexity
of solving fundamental problems such as broadcast, leader election, and MST construction is significantly
smaller in the P2P-GOSSIP model compared to the standard CONGEST model.

Several open questions remain. One is to improve the round complexity of our protocol.  In particular,
can we improve the round complexity to  $O(\log^2 n)$ rounds while keeping $\tilde{O}(n)$ communication? Another interesting follow up work is to adapt our protocol to work under a churn setting. A third interesting research direction is to investigate the complexity of other fundamental problems such as computing shortest paths in the P2P-GOSSIP model.

\bibliographystyle{plainurl}
\bibliography{reference,experiment}


\appendix

\section{Details on Experimental Results}\label{app:appendixExperimental}

The algorithm is implemented in sequential simulated form using Python and the graph library graph\_tool \cite{peixoto_graph-tool_2014}, to study properties of the algorithm for a few low-conductance graphs with different properties, and study the number of rounds and estimate the conductance provided by the algorithm.

\subsection{Simulation Implementation Details}

The simulator begins by dividing the input graph $G$ into $n$ distinct components of size 1, and then progresses through a series of $O(\log{n})$ rounds, each divided into sampling, merging, sparsification, and expansion stages.

Each distinct component generates an $\ell_0$ sketch that samples a single inter-cluster edge with high probability. These sketches are implemented using $k$-wise independent hash functions with sparse recovery using the techniques described in \cite{cormode}.
The $\ell_0$ sketch for each cluster is computed directly by updating the sketch with all incident edges of each node in the cluster, rather than through \PushSum{}. On the occasion that the $\ell_0$-sampling fails for a particular component, it is retried. This does not occur with high probability.

After the components are merged, the set of sampled edges is sparsified to reduce the degree of any nodes $u$ incident to a set of four or more sampled edges (denoted as $N_u$ in the algorithm description), by adding edges to create an undirected cycle over $N_u \cup \{u\}$, and removing all but two of the sampled edges incident to $u$.

Finally, the procedures \CreateExpander{} and \ExpanderDegreeReduction{} are performed to create overlays on each cluster such that the result a $O(1)$-bounded degree expander. The implementation generates the values $\Lambda = \log{n}$ and $\Delta = 2d\Lambda$ for the \CreateExpander{} step on a cluster of size $n$, with a constant number of $5$ overlay iterations and constant length random walks with length $\ell = 13$. For \ExpanderDegreeReduction{}, each node produces $10$ tokens and accepts up to $40$, yielding an expander graph with maximum degree at most $50$, and the length of the random walks was set at $\ell = 2\log{n}$.

For smaller clusters with very low degree nodes, it is sometime the case that the random walks on benign graphs will disconnect the graph: this should occur with low probability, but it is observed in practice on the graphs tested. The simulation handles this by checking that each cluster is connected, and if it is not, the nodes disconnected from the largest connected component restore the edges that they had in the previous overlay, or in the initial benign graph if disconnected in the first overlay.

The new merged components are each individually $O(1)$-degree expander graphs, and become subject to the next round of the algorithm.

\subsection{Graphs Under Consideration in Detail}

\textit{circle-10000} is a "fat" cycle graph with $n = 10000$ where each node is connected to the nine following nodes, and the nine terminal nodes are then attached similarly to the initial nodes. Thus, it is a circular 18-regular graph and is characterized by its high diameter.

\textit{grid-50-50} is a graph on a square grid with each side 50 nodes wide. It has 2500 nodes, and the maximum degree is 4. This graph thus has large diameter and low conductance as well.

\textit{barabasi-2000-2-2} is a random graph generated through a linear Barabási-Albert model with $m=2$ \cite{DBLP:journals/corr/abs-2110-00287}. Starting from a pair of nodes, 1998 nodes are added,
each bringing two additional edges. These edges connect to previous nodes using preferential attachment, where the probability of attaching to a previous node is proportional to the degree of the node: the probability that an incoming edge will connect to a node $v$ of degree $d_G(v)$ is given $p(v) = \frac{d_G(v)}{\sum_{i \in V}d_G(i)}$ This leads to a graph that has a minimum degree 2, and for which the degree follows a power law relationship $P(k) \sim k^{-3}$. This graph has very high degree subgraphs as well as nodes of very low degree.

The algorithm was also tested on the following graphs produced through real-world data:

\textit{kissler} is a graph representation of the proximity dataset of \cite{Kissler479154} and which has been processed according to the technique utilized in \cite{10.1371/journal.pone.0272739}. This graph describes close contact for individuals who utilized the BBC Pandemic Haslemere app, where the participants (nodes) have an edge corresponding to a point of contact within 5 meters for any length of time during the study period. The graph from this dataset has been restricted to its largest connected component for the purposes of this test.

\textit{twitch} is a graph representation of user relationships on the social network Twitch, where the nodes represent UK users of Twitch and edges represent mutual friendships between users, as collected in May 2018 by \cite{rozemberczki2019multiscale}. This graph has low diameter due to a small number of high-degree nodes, similar to a preferential attachment graph.

\textit{wiki} is a graph representation of a hyperlink network based around a specific topic, in particular, the Wikipedia article on chameleons, as collected in \cite{rozemberczki2019multiscale} from the English language Wikipedia in December 2018. This graph contains a small number of nodes with extremely high degree, and many subgraphs connected by single edges. Thus, it is poorly-connected and high diameter, with a correspondingly small conductance, yet also a large maximum degree.

\subsection{Observations}
\label{subsec:observations}

The analysis provided by the simulator gave subjective information about the performance of the algorithm under different conditions, as per the results in Table \ref{fig:exp-results}.

The particular choice of the length of the random walks of \CreateExpander{} was chosen by experimentation, to be sufficient to reduce the requirement of larger constants for $\Lambda$ and $\Delta$ that would increase the runtime of the algorithm greatly. 

The degree reduction of \ExpanderDegreeReduction{} was greatly beneficial to the runtime, as without it, the final expanders produced by \CreateExpander{} were observed to have significantly large degree, at minimum $\Delta/8 = O(\log{n})$, since by the final overlay stage, the graph had improved conductance sufficiently that most tokens are dispersed widely and are not discarded or returned to the original node. This was greater than the value of $d$ chosen after only two rounds. These $O(\log{n})$-max-degree intermediate expander graphs would be combined, and in applying \CreateExpander{} at this point, the resulting $\Delta$ for this graph would in practice be $2d_{max}\log{n} = O(\log^2{n})$. This was unnoticeable in the early rounds, but the final round which merged a small number of $O(n)$-size components, at this point, had quite sizable max degree. Since simulating the algorithm in a sequential setting required $O(n\Delta) = O(n \log^2{n})$ random walks for \CreateExpander{}, and sequentially computing the path of each random walk had complexity $O(k_e) = O(\log^2{n})$ where $k_e$ is the average degree of the cluster, not sparsifying leads to large polylog factors in the total runtime of \CreateExpander{} that adversely affect performance of the simulation.

The intermediate expander graphs, especially in early rounds, were close to being complete graphs. However, when the intermediate graphs were sparsified and thus became bounded-degree graphs, the resulting size of the graphs was more manageable as well. The process of \ExpanderDegreeReduction{} was efficient and effective with the given value of $\ell$, requiring significantly less time in the sequential model, as in practice all tokens became inactive in a single phase in virtually all attempts.

Given the same input constants, and roughly similar sizes of graphs (in terms of number of vertices), the resulting conductance upper-bounds on the expander graphs produced by the algorithm were extremely similar to each other, undoubtedly due to the similar bounds on the maximum degree. The conductance of the resulting expander was not dependent on the initial conductance; thus the lowest conductance graphs such as \textit{wiki-small} received a sizable improvement and the graphs already functionally an expander, like \textit{barabasi-2000-2-2}, showed rather little improvement. The total rounds of the algorithm varied most with graphs of smaller or higher diameter, with \textit{grid-50-50} requiring significantly more rounds than \textit{twitch} despite the latter graph having over twice the vertices.

\end{document}